\newenvironment{FigureInColumn}
  {\par\medskip\noindent\minipage{\linewidth}\captionsetup{width=0.9\linewidth}}
  {\endminipage\par\medskip}
\newtheorem{theorem}{Theorem}
\newtheorem{lemma}[theorem]{Lemma}
\theoremstyle{definition}
\newcommand{\algoCap}{Kernel Descent}
\newcommand{\algo}{kernel descent}
\newcommand{\algoAbb}{KD}
\renewcommand{\todo}[2][]{\tikzexternaldisable\@todo[#1]{#2}\tikzexternalenable}
\tikzset{
    operator/.append style={fill=blue!10},
}
\newlength{\Ubraceignoreleft}
\newlength{\Ubraceignoreright}
\NewDocumentCommand{\Ubrace}{ommo}{%
  {
   \IfValueT{#1}{\settowidth{\Ubraceignoreleft}{$#1$}}%
   \IfValueT{#4}{\settowidth{\Ubraceignoreright}{$#4$}}%
   \hspace*{\Ubraceignoreleft}
   \underbrace{%
     \hspace*{-\Ubraceignoreleft}%
     #2%
     \hspace*{-\Ubraceignoreright}%
   }_{#3}%
   \hspace*{\Ubraceignoreright}%
  }%
}
\title{Introducing the Kernel Descent Optimizer for Variational Quantum Algorithms}
 \author{Lars Simon\footnote{\tt lars.simon@bdr.de}}
 \author{Holger Eble\footnote{\tt holger.eble@bdr.de}}
 \author{Manuel Radons\footnote{\tt manuel.radons@bdr.de}}
\affil{\emph{Bundesdruckerei GmbH, Kommandantenstraße 18, 10969 Berlin, Germany}}
\date{}
\begin{document}
	
\maketitle

    \vspace{-1.0cm}




\begin{abstract}
In recent years, variational quantum algorithms have garnered significant attention as a candidate approach for near-term quantum advantage using noisy intermediate-scale quantum (NISQ) devices. In this article we introduce kernel descent, a novel algorithm for minimizing the functions underlying variational quantum algorithms. We compare kernel descent to existing methods and carry out extensive experiments to demonstrate its effectiveness. In particular, we showcase scenarios in which kernel descent outperforms gradient descent and quantum analytic descent. The algorithm follows the well-established scheme of iteratively computing classical local approximations to the objective function and subsequently executing several classical optimization steps with respect to the former. Kernel descent sets itself apart with its employment of reproducing kernel Hilbert space techniques in the construction of the local approximations, which leads to the observed advantages. 
	\end{abstract}

\begin{multicols}{2}

    \section{INTRODUCTION}
    \label{sec:introduction}
    
   The promise of exponential computational advantages from quantum computing remains unrealized in practice due to the noisiness and limited qubit count of currently available quantum hardware. Variational quantum algorithms (VQAs) have risen to prominence as a promising strategy to achieve utility of quantum computers in the near term, within the constraints of the so-called noisy intermediate-scale quantum (NISQ) era. 

VQAs leverage a classical optimizer to train a parametrized quantum circuit, drawing parallels to classical machine learning models such as neural networks \cite{Cerezo2021}. The primary objective of VQA training, in its most basic form, is to find a choice of parameters for a parametrized quantum circuit that minimizes the expected value of a given observable with respect to the state computed by the circuit. This adaptive approach, combined with the typically shallower circuit depths required compared to algorithms designed for fault-tolerant quantum computing, accounts for the expected NISQability of the approach \cite{Cerezo2021}. 

A common strategy for the minimization of the objective function underlying a VQA involves computing a local approximation around a given point in parameter space,
followed by one or more classical optimization steps with respect to this approximation. The outcome of this procedure yields the next base point for a local approximation of the function and the process is repeated until a predefined convergence criterion is met. 
Arguably, the most prominent representatives of this training philosophy are gradient descent (via linear approximations), and quantum analytic descent (via trigonometric approximations).  There also exist derivative-free methods ---for example, Nelder-Mead \cite{nelder_mead}, Powell \cite{powell_algorithm}, COBYLA \cite{COBYLA_algorithm}--- as well as methods designed specifically for the functions underlying VQAs, such as Rotosolve \cite{rotosolve_1}, \cite{rotosolve_2}, \cite{rotosolve_3}, \cite{rotosolve_4}. 
However, gradient-based optimizers are more common in the context of VQAs. The evaluations of the gradients are usually performed via the parameter-shift rules, even though the required number of circuit evaluations scales unfavorably with the number of trainable parameters, cf.  \cite{VQA_backpropagation_scaling}, \cite{abbas_backpropagation_state_tomography}.
For information on parameter-shift rules, see
\cite{Mitarai_2018}, \cite{Schuld_2019}, \cite{Mari_2021}, \cite{Wierichs2022generalparameter}. 

In this article, we introduce \algo, a novel training algorithm for VQAs. Kernel descent follows the iterative optimization process described above. Its key innovation is the exploitation of the fact that the functions computed by VQAs are contained in a set which naturally carries the structure of a reproducing kernel Hilbert space (RKHS).  
This approach often leads to an improved quality of the local approximations computed during training, to which we ascribe the algorithm's observed performance advantages, particularly in terms of speed of convergence and robustness.
We note that neither measurement shot noise nor quantum hardware noise are taken into account in the theoretical description of kernel descent. However, in \cite{simon2024denoising} we demonstrated experimentally that our RKHS techniques can in fact be used to mitigate the adverse effects of both these types of noise.

Kernel descent introduces a hyperparameter $L$ that governs the order of the local approximations computed during its execution. As $L$ increases, the approximation quality improves, albeit with a corresponding increase in the number of circuit evaluations per iteration. For $L=1$, the number of circuit evaluations per iteration is comparable to that of gradient descent using the parameter-shift rules, and for $L=2$, it is comparable to that of quantum analytic descent.
Consequently, our experiments will compare the performance of kernel descent with $L=1$ and $L=2$ to that of gradient descent and quantum analytic descent, respectively.

To level the playing field, all algorithms will be tested in their most basic form, without adaptive learning rates or similar enhancements. Advanced techniques, such as normalized gradient descent \cite{Hazan2015BeyondCS}, Nesterov’s Accelerated Gradient method \cite{Nesterov1983AMF}, the ADAM optimizer \cite{KingBa15}, or natural gradient descent \cite{natural_gradient_descent}, can be adapted to kernel descent (see \cite{Suzuki_2021} and \cite{quantum_natural_gradient} for their treatment in the context of VQAs). However, we leave this task for future investigations as it would go beyond the scope of this work.

The present article is part of a larger study on the use of RKHS techniques in the context of VQAs. 
This approach has already been employed in constructing classical surrogates of quantum machine learning models \cite{simon2024interpolating} and, as mentioned above, in alleviating the adverse effects of noise on gradient descent in VQAs \cite{simon2024denoising}.
The project originated in the application of RKHS methods to the training of neural support vector machines \cite{simon2023algorithms}, which subsequently led to the development and analysis of neural quantum support vector machines in \cite{simon2023neural}. This series of investigations laid the groundwork for the development of kernel descent as a robust optimization strategy specifically tailored to the functions underlying VQAs. 

\textbf{Content and Structure:} In Section \ref{sec:algorithm}, after laying out our assumptions and the optimization task, we introduce kernel descent and provide a detailed comparison of the algorithm's underlying theory with that of gradient descent and quantum analytic descent. Building on this groundwork, Section \ref{sec:experiments} presents the results of our experimental comparison of kernel descent to gradient descent and quantum analytic descent, providing empirical evidence for the claimed advantages in terms of speed of convergence and, particularly, robustness of the novel algorithm. Our closing remarks, including a discussion of the results and suggestions for future research directions, follow in Section \ref{sec:conclusion}.

	\section{ALGORITHM}
	\label{sec:algorithm}
    In Section \ref{subsec:alg_setting} we establish the setting and state the problem addressed by our algorithm; subsequently, in Section \ref{subsec:alg_prelim}, we go over some technical preliminaries. In Section \ref{subsec:alg_strategy} we state the general strategy used to tackle the problem, before giving a detailed description of our algorithm, including performance guarantees, in Section \ref{subsec:alg_description}. Finally, in Section \ref{subsec:alg_comparison}, we compare our algorithm to gradient descent and quantum analytic descent.
    
    \subsection{Setting}
    \label{subsec:alg_setting}
    Letting $m,n\in\mathbb{Z}_{>0}$, we consider functions of the form $f\colon\mathbb{R}^m\to\mathbb{R}$,
    \begin{equation}\label{eq:f}
        f(\theta) = \langle\psi (\theta)|\mathcal{M}|\psi (\theta)\rangle ,
    \end{equation}
    where $\mathcal{M}\in\mathbb{C}^{2^n\times 2^n}$ is an observable and
    $$|\psi (\theta)\rangle = C_{m+1}R_m (\theta_m) C_m\cdots  R_1 (\theta_1) C_1|0\rangle^{\otimes n},$$
    where $C_1 ,\dots , C_{m+1}$ are $n$-qubit unitaries and, for all $j\in\{1,\dots ,m\}$,
    $$R_j (\theta_j) = \exp \left(-i\frac{\theta_j}{2}G_j\right), \theta_j\in\mathbb{R},$$
    is a rotation, where $G_j\in\mathbb{C}^{2^n\times 2^n}$ is Hermitian with set of eigenvalues $\{-1,1\}$ (e.g., $G_j$ could be an element of $\{I,X,Y,Z\}^{\otimes n}\setminus\{I^{\otimes n}\}$, where $I,X,Y,Z$ denote the Pauli matrices). For a visual representation of the corresponding circuits, see Figure \ref{fig:general_circuit}.

    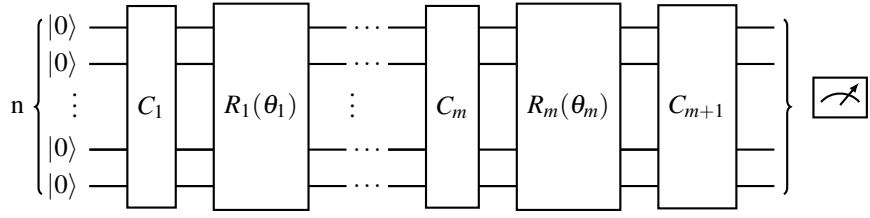
\begin{figure*}[htbp]
    \centering
        \begin{quantikz}
    \lstick[wires=5]{n}&\lstick{\ket{0}} & \gate[5, nwires=3]{C_1}    & \gate[5, nwires=3]{R_1(\theta_1)}   &\ \ldots\ \qw   & \gate[5, nwires=3]{C_m} &  \gate[5, nwires=3]{R_m(\theta_m)} & \gate[5, nwires=3]{C_{m+1}} & \qw \rstick[wires=5]{} \\[-10pt]
    & \lstick{\ket{0}} & \qw                 & \qw & \qw \ \ldots\ & \qw &  \qw & \qw &  \qw \\[-10pt]
     & \lstick{$\vdots$}  &       &           & \lstick{$\vdots$}     & &  &  &  &  |[meter]| \\[-3pt]
    & \lstick{\ket{0}} & \qw                & \qw    &\qw\ \ldots\ & \qw &   &\qw    &  \qw \\[-10pt]
    & \lstick{\ket{0}} & \qw                 & \qw & \qw\ \ldots\ & \qw &  \qw    &  &  \qw       
    \end{quantikz}
    \caption{This figure shows the circuit underlying the computation of the function $f$ in Section \ref{subsec:alg_setting}.}\label{fig:general_circuit}
    \end{figure*}

    The goal is now to minimize the function $f$, i.e., to find a point in parameter space $\mathbb{R}^m$, at which the value of $f$ is minimal. The most commonly-used algorithm for this task is gradient descent using the parameter-shift rules \cite{Mitarai_2018}, \cite{Schuld_2019}, \cite{Mari_2021}, \cite{Wierichs2022generalparameter}, which requires $2m$ circuit evaluations per iteration. Another algorithm is quantum analytic descent \cite{quantum_analytic_descent}, which requires $2m^2+m+1$ circuit evaluations per iteration. Both of these, as well as the algorithm we will introduce in Section \ref{subsec:alg_description}, are a special case of the technique introduced in Section \ref{subsec:alg_strategy}.

    \subsection{Technical preliminaries}
    \label{subsec:alg_prelim}
    It follows from the results in \cite{Schuld_2021} that $f$ (see Section \ref{subsec:alg_setting}) is contained in $H$, where $H$ is the set of all functions $g\colon\mathbb{R}^m\to\mathbb{R}$ of the form $g(z)=\sum_{\omega\in\{-1,0,1\}^m} c_\omega e^{i\omega^\intercal z}$ with $\overline{c_{-\omega}} = c_\omega\in \mathbb{C}$ for all $\omega\in\{-1,0,1\}^m$. When equipping $H$ with the (real) inner product given by
    \begin{align*}
    		\langle g_1 ,g_2\rangle_H= \int_{[-\pi ,\pi]^m} g_1 (z) g_2 (z) \mathrm{d}z, \ g_1,g_2\in H,
    \end{align*}
    it turns out that $H$ carries the structure of a reproducing kernel Hilbert space with kernel $K=\left(\frac{3}{2\pi}\right)^m\tilde{K}$, where
    \begin{align*}
    	\tilde{K}(x,z) = \prod_{j=1}^m \frac{1+2\cos(x_j - z_j )}{3}, \ x,z\in\mathbb{R}^m.
    \end{align*}
    The following lemma shows how the structure of $H$ can be exploited to obtain approximations of $f$:
    \begin{lemma}\label{lemma:general_approx}
        Let $D\in\mathbb{Z}_{\geq 1}$ and let $p_1,\dots ,p_D\in\mathbb{R}^m$. Then the linear system of equations
        \begin{align*}
				\left(\tilde K(p_i , p_j)\right)_{1\leq i,j\leq D}
				\cdot\tilde\eta
				=(f(p_1) ,\dots , f(p_D))^\intercal
    \end{align*}
        has at least one solution $\tilde\eta\in\mathbb{R}^D$, allowing us to define $\tilde{f}\colon\mathbb{R}^m\to\mathbb{R}$ given by
        $$\tilde{f}(\theta)= \sum_{j=1}^{D} \tilde\eta_j \tilde K(p_j, \theta).$$
        We then have:
        \begin{itemize}
            \item $\tilde{f}$ does not depend on the choice of (the not necessarily uniquely determined solution) $\tilde\eta$.
            \item $\tilde{f}$ is the orthogonal projection (wrt.\ $\langle\cdot , \cdot\rangle_H$) of $f$ onto the subspace of $H$ spanned by $K(p_1 , \cdot ), \dots , K(p_D , \cdot )$.
            \item $\tilde{f}(p_j) = f(p_j)$ for $j=1,\dots , D$.
        \end{itemize}
    \end{lemma}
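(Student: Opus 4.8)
The plan is to reduce everything to the reproducing property of $K$ together with the elementary linear algebra of the associated Gram matrix. I would set $V=\operatorname{span}\{K(p_1,\cdot),\dots,K(p_D,\cdot)\}\subseteq H$ and record the two identities that drive the argument. First, the reproducing property gives $\langle g,K(p_i,\cdot)\rangle_H=g(p_i)$ for every $g\in H$ and every $i$. Second, its specialization $\langle K(p_i,\cdot),K(p_j,\cdot)\rangle_H=K(p_i,p_j)$ shows that the Gram matrix $G=(K(p_i,p_j))_{i,j}$ is real, symmetric and positive semidefinite. Since $K=(3/(2\pi))^m\tilde K$, we have $G=(3/(2\pi))^m\tilde G$ with $\tilde G=(\tilde K(p_i,p_j))_{i,j}$, so all statements about $\tilde G$ and $G$ differ only by this positive scalar. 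I would also note once that every element of $H$ is real-valued (by the Hermitian symmetry $\overline{c_{-\omega}}=c_\omega$), so $H$ is a real Hilbert space and $V$ is spanned over $\mathbb{R}$ by the real functions $K(p_j,\cdot)$.

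For existence of $\tilde\eta$ I would invoke that $V$ is finite-dimensional, hence closed, so the orthogonal projection $P_Vf$ exists, and may be written $P_Vf=\sum_j\alpha_jK(p_j,\cdot)$ with real $\alpha_j$. Testing the defining orthogonality $f-P_Vf\perp V$ against $K(p_i,\cdot)$, and applying the reproducing property to both $f$ and $P_Vf$, yields the normal equations $G\alpha=(f(p_1),\dots,f(p_D))^\intercal$; rescaling, $\tilde\eta:=(3/(2\pi))^m\alpha$ solves the stated system. This simultaneously settles the projection claim, since $\tilde f=\sum_j\tilde\eta_j\tilde K(p_j,\cdot)=\sum_j\alpha_j K(p_j,\cdot)=P_Vf$.

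For independence of the chosen solution I would argue directly. If $\tilde\eta,\tilde\eta'$ both solve the system and $\delta:=\tilde\eta-\tilde\eta'$, then $\tilde G\delta=0$, and the function $h:=\sum_j\delta_j K(p_j,\cdot)\in H$ satisfies $\|h\|_H^2=\delta^\intercal G\delta=(3/(2\pi))^m\delta^\intercal\tilde G\delta=0$ by the Gram identity. Hence $h$ is the zero function, so $\sum_j\delta_j\tilde K(p_j,\cdot)\equiv 0$ and $\tilde f$ does not depend on the solution. For the interpolation property I would use the reproducing property once more: $\tilde f(p_i)=\langle\tilde f,K(p_i,\cdot)\rangle_H=\sum_j\tilde\eta_j\tilde K(p_i,p_j)=(\tilde G\tilde\eta)_i=f(p_i)$, which also re-confirms $f-\tilde f\perp V$ and hence the projection claim independently of the coefficient argument.

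I expect the only genuinely delicate point to be the possible linear dependence of the kernel sections: $\tilde G$ may be singular, so $\tilde\eta$ is in general non-unique, and one must avoid presuming invertibility anywhere. The norm computation $\|h\|_H^2=\delta^\intercal G\delta$ is precisely what tames this, converting the algebraic condition $\tilde G\delta=0$ into the functional statement $h\equiv 0$. Everything else is routine bookkeeping of the positive scalar $(3/(2\pi))^m$ and the observation that all coefficients may be taken real because $\tilde K$ is real-valued.
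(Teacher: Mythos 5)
Your proof is correct. Note, though, that the paper does not actually supply an argument here: its ``proof'' is a one-line deferral to Implementation Remark 6 and Lemma 13 of \cite{simon2024interpolating}, so your write-up is best viewed as filling in the standard RKHS details that the authors delegate to that reference rather than as a divergent route. The substance of your argument is exactly the expected one: the reproducing property $\langle g, K(p_i,\cdot)\rangle_H = g(p_i)$ turns the orthogonality condition $f - P_V f \perp V$ into the normal equations with Gram matrix $G = \bigl(\langle K(p_i,\cdot),K(p_j,\cdot)\rangle_H\bigr)_{i,j}$, existence of $\tilde\eta$ follows from the existence of the projection onto the finite-dimensional $V$ (so no invertibility of $\tilde G$ is ever assumed), and well-definedness of $\tilde f$ follows from $\tilde G\delta = 0 \Rightarrow \Vert\sum_j \delta_j K(p_j,\cdot)\Vert_H^2 = \delta^\intercal G\,\delta = 0$. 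That last identity is indeed the one delicate point, and you handle it correctly; you also keep the bookkeeping of the scalar $(3/(2\pi))^m$ relating $K$ and $\tilde K$ straight, and you implicitly (and legitimately) use the symmetry $\tilde K(x,z)=\tilde K(z,x)$ when identifying $(G\alpha)_i$ with $(P_V f)(p_i)$. The only redundancy is that you establish the projection and interpolation claims twice, once via the normal equations and once via the reproducing property applied to $\tilde f$; either pass alone would suffice, but the double check does no harm.
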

    \begin{proof}
        This is obvious from the results in \cite{simon2024interpolating}, specifically Implementation Remark 6 and Lemma 13.
    \end{proof}

    In order to compute the approximation given by Lemma \ref{lemma:general_approx}, we need to solve a linear system of equations. However, for specific choices of the points $p_1,\dots ,p_D$, the occurring matrix becomes trivial to invert. One such choice is given by the following lemma:

    \begin{lemma}\label{lemma:orthonormal_basis}
        For all $p\in\mathbb{R}^m$, the family
        $$\left(\sqrt{\left(\frac{2\pi}{3}\right)^m}K(p+q,\cdot )\right)_{q\in\left\{-\frac{2\pi}{3},0,\frac{2\pi}{3}\right\}^m}$$
        is an orthonormal basis for $H$.
    \end{lemma}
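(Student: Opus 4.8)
The plan is to combine the reproducing property of $K$ with a dimension count. Since the index set $\left\{-\frac{2\pi}{3},0,\frac{2\pi}{3}\right\}^m$ has exactly $3^m$ elements and any orthonormal family is automatically linearly independent, it suffices to (i) show that $\dim_{\mathbb{R}} H = 3^m$ and (ii) verify that the stated family is orthonormal; the basis property then follows because a linearly independent family of $\dim H$ vectors necessarily spans. For (i) I would count degrees of freedom in the representation $g(z)=\sum_{\omega\in\{-1,0,1\}^m} c_\omega e^{i\omega^\intercal z}$: the constraint $\overline{c_{-\omega}}=c_\omega$ forces $c_0\in\mathbb{R}$ and pairs up $\{\omega,-\omega\}$ for $\omega\neq 0$, yielding $1+2\cdot\frac{3^m-1}{2}=3^m$ real parameters. (As an aside, this same constraint makes every $g\in H$ real-valued, so $\langle\cdot,\cdot\rangle_H$ is a genuine real inner product.)

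For (ii) I would first recall the standard RKHS identity $\langle K(x,\cdot),K(y,\cdot)\rangle_H = K(x,y)$, immediate from the reproducing property $\langle g,K(x,\cdot)\rangle_H=g(x)$ applied to $g=K(y,\cdot)$ together with the symmetry $K(x,y)=K(y,x)$ (which holds since $\tilde K$ depends only on the coordinatewise differences through $\cos$). Hence the inner product of two members of the family is
\[\left(\tfrac{2\pi}{3}\right)^m K(p+q,p+q')=\left(\tfrac{2\pi}{3}\right)^m\left(\tfrac{3}{2\pi}\right)^m\tilde K(p+q,p+q')=\tilde K(p+q,p+q').\]
Because $\tilde K$ sees only the differences $(p_j+q_j)-(p_j+q_j')=q_j-q_j'$, the base point $p$ cancels entirely and this reduces to $\prod_{j=1}^m\frac{1+2\cos(q_j-q_j')}{3}$, independent of $p$.

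The crux is then the elementary observation that each factor is a Kronecker delta on $\left\{-\frac{2\pi}{3},0,\frac{2\pi}{3}\right\}$: if $q_j=q_j'$ the factor equals $\frac{1+2}{3}=1$, whereas if $q_j\neq q_j'$ the difference $q_j-q_j'$ lies in $\left\{\pm\frac{2\pi}{3},\pm\frac{4\pi}{3}\right\}$, on which $\cos=-\tfrac12$, so the factor equals $\frac{1-1}{3}=0$. Thus the product collapses to $\delta_{q,q'}$, establishing orthonormality and hence, by the cardinality argument, that the family is an orthonormal basis.

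I do not anticipate a genuine obstacle: once one notices that $\left\{-\frac{2\pi}{3},0,\frac{2\pi}{3}\right\}$ are precisely the points whose pairwise differences force $\cos$ to take the two values $1$ and $-\tfrac12$ that make the factor $1$ and $0$ respectively, the computation is forced. The only step demanding a little care is the dimension bookkeeping in (i), since it is what licenses upgrading \emph{orthonormal} to \emph{orthonormal basis} without separately verifying that the family spans $H$.
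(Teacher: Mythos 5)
Your proposal is correct and takes essentially the same route as the paper's proof, which likewise rests on the dimension count $\dim_{\mathbb{R}}(H)=3^m=\operatorname{card}\left(\left\{-\frac{2\pi}{3},0,\frac{2\pi}{3}\right\}^m\right)$ together with explicitly calculating the pairwise inner products of the family. You have merely filled in the details the paper leaves implicit: the degree-of-freedom count behind $\dim_{\mathbb{R}}(H)=3^m$, and the reduction via the reproducing property to $\tilde K(p+q,p+q')=\prod_{j=1}^m\frac{1+2\cos(q_j-q_j')}{3}=\delta_{q,q'}$, using that $\cos$ equals $-\tfrac{1}{2}$ on $\left\{\pm\frac{2\pi}{3},\pm\frac{4\pi}{3}\right\}$.
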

    \begin{proof}
        Since $\dim_\mathbb{R}(H)=3^m = \operatorname{card}\left(\left\{-\frac{2\pi}{3},0,\frac{2\pi}{3}\right\}^m\right)$, the claim follows from explicitly calculating the pairwise inner products of the members of the given family.
    \end{proof}

    \subsection{Strategy}
    \label{subsec:alg_strategy}
    When aiming to minimize $f$, a common technique is to compute a local approximation around a given point in parameter space $\mathbb{R}^m$ and subsequently perform one (or several) optimization step(s) with respect to this approximation. More precisely, the technique consists of the following steps:
    \begin{enumerate}
        \item Pick an initial point $\theta_0\in\mathbb{R}^m$ in parameter space.
        \item For $t=0,1,\dots ,T-1$:
        \begin{enumerate}
            \item Compute a classical local approximation $\tilde{f}_t$ of $f$ around $\theta_t$ (this typically involves the execution of various quantum circuits).
            \item Execute one (or several) classical optimization step(s) with respect to $\tilde{f}_t$ and obtain a point $\theta_{t+1}\in\mathbb{R}^m$ in parameter space.
        \end{enumerate}
        \item Output $\theta_T$.
    \end{enumerate}
    Instead of executing a fixed number of iterations $T$, one can, of course, impose some other stopping criterion. We remark that gradient descent can be seen as a special case of this technique, since computing the value of the function and its gradient at a point in parameter space is equivalent to computing the best local linear approximation at this point. Likewise, quantum analytic descent is a special case of this technique: The local approximation computed during quantum analytic descent is a trigonometric polynomial that coincides with $f$ up to second order at the given point in parameter space.
    
    The algorithm we will present in Section \ref{subsec:alg_description} also follows the above recipe; in light of this, we only need to describe how to obtain local approximations of $f$ around points in parameter space.

    \subsection{Description of the algorithm}
    \label{subsec:alg_description}
    Our algorithm employs the technique introduced in Section \ref{subsec:alg_strategy}. Accordingly, it suffices to describe how we construct local approximations of $f$ around points in parameter space. To this end, let $p\in\mathbb{R}^m$ be a fixed point in parameter space. We introduce a hyperparameter $L\in\mathbb{Z}$, where $1\leq L \leq m$, which controls the order of the approximation.
    We then determine all points $q_1 ,\dots , q_D\in \left\{-\frac{2\pi}{3},0,\frac{2\pi}{3}\right\}^m$ with the property that at most $L$ entries are non-zero (without repetitions, i.e., $q_i\neq q_j$ whenever $i\neq j$). Now, for $j=1,\dots ,D$, we set $p_j := p + q_j\in\mathbb{R}^m$. Following Lemma \ref{lemma:general_approx}, we then obtain an approximation $\tilde{f}\colon\mathbb{R}^m\to\mathbb{R}$ to $f$ by setting $\tilde{f}(\theta)= \sum_{j=1}^{D} \tilde\eta_j \tilde K(p_j, \theta)$, where $\tilde\eta\in\mathbb{R}^D$ is a solution to the linear system of equations
    \begin{align*}
				\left(\tilde K(p_i , p_j)\right)_{1\leq i,j\leq D}
				\cdot\tilde\eta
				=(f(p_1) ,\dots , f(p_D))^\intercal .
    \end{align*}
    At first glance it seems like the need to solve this linear system of equations entails a large classical computational overhead. However, the points $p_1 , \dots , p_D$ were specifically chosen so that the occurring matrix is the identity matrix, see Lemma \ref{lemma:orthonormal_basis}. It follows that $\tilde\eta=(f(p_1) ,\dots , f(p_D))^\intercal$, so that $\tilde{f}\colon\mathbb{R}^m\to\mathbb{R}$ is given by
    \begin{align}\label{eq:eval-formula}
        \tilde{f}(\theta)= \sum_{j=1}^{D} f(p_j ) \tilde K(p_j, \theta)
        =\sum_{j=1}^{D} f(p + q_j ) \tilde K(q_j, \theta - p)
        .
    \end{align}
    In order to obtain this approximation $\tilde{f}$, we need a quantum device to evaluate $f$ in $p+q_1 , \dots p+q_D$. Note that due to the exact formula \eqref{eq:eval-formula} this approach entails no classical computational overhead. The following result guarantees that $\tilde{f}$ is indeed a good local approximation of $f$ around $p$.

    \begin{theorem}\label{theorem:good_approx}
        Let $p\in\mathbb{R}^m$, and let $L\leq m$ be a positive integer. Moreover, let $q_1 ,\dots , q_D$ be the pairwise distinct elements of $\left\{-\frac{2\pi}{3},0,\frac{2\pi}{3}\right\}^m$ with the property that at most $L$ entries are non-zero. Now, define $\tilde{f}\colon\mathbb{R}^m\to\mathbb{R}$ by
    \begin{align*}
        \tilde{f}(\theta)
        =\sum_{j=1}^{D} f(p + q_j ) \tilde K(q_j, \theta - p)
        \text{ for all }\theta\in\mathbb{R}^m.
    \end{align*}
    Then the following holds:
    \begin{enumerate}[(i)]
        \item\label{theorem:good_approx_number_of_evaluations} In order to obtain the approximation $\tilde f$, one needs to evaluate $f$ in $D=\sum_{k=0}^L 2^k\cdot \binom{m}{k}$ points in parameter space.
        \item\label{theorem:good_approx_concides_on_subspaces} If $\vartheta\in\mathbb{R}^m$ has at most $L$ non-zero entries, then $\tilde{f}(p+\vartheta )=f(p+\vartheta )$. That is, 
        $\vartheta\mapsto\tilde{f}(p+\vartheta )$ coincides with $\vartheta\mapsto f(p+\vartheta )$ on any subspace of $\mathbb{R}^m$ that is spanned by at most $L$ of the coordinate axes.
        \item\label{theorem:good_approx_coincides_up_to_order_L} $\tilde f$ coincides with $f$ up to order $L$ locally around $p$. More precisely, we have $D^{\alpha}(f-\tilde{f})(p) = 0$ for all multiindices $\alpha\in (\mathbb{Z}_{\geq 0})^m$ with $|\alpha |\leq L$.
        \item\label{theorem:good_approx_error_estimate} Denote the union of subspaces of $\mathbb{R}^m$ that are spanned by at most $L$ of the coordinate axes as $\mathcal{C}_L$, i.e.,
        \begin{align*}
            \mathcal{C}_L = \bigcup_{\substack{S\subseteq\{\mathbf{e}_1 , \dots , \mathbf{e}_m\}\colon\\ \operatorname{card}(S)\leq L}} \operatorname{span}_{\mathbb{R}} (S) \subseteq\mathbb{R}^m,
        \end{align*}
        where $(\mathbf{e}_1 , \dots , \mathbf{e}_m )$ denotes the canonical basis of $\mathbb{R}^m$. Given any norm $\Vert\cdot\Vert$ on $\mathbb{R}^m$, the local approximation $\tilde{f}$ satisfies the estimate
        \begin{align*}
            |f(p+\vartheta ) - \tilde{f}(p+\vartheta)| = O(\Vert\vartheta\Vert^L\cdot\operatorname{dist}_{\Vert\cdot\Vert}(\vartheta , \mathcal{C}_L))
        \end{align*}
        as $\mathbb{R}^m\ni\vartheta\xrightarrow{}0$.
    \end{enumerate}
    \end{theorem}
    \begin{proof}
        Points (\ref{theorem:good_approx_number_of_evaluations}), (\ref{theorem:good_approx_concides_on_subspaces}), and (\ref{theorem:good_approx_coincides_up_to_order_L}) follow analogously to the proof of Theorem 7 in \cite{simon2024interpolating}. Point (\ref{theorem:good_approx_error_estimate}) follows from Lemma \ref{lemma:error_estimate_when_vanishing_along_certain_subspaces} in the appendix, applied to the function $\mathbb{R}^m\to\mathbb{R}$, $\vartheta\mapsto f(p+\vartheta)-\tilde{f}(p+\vartheta)$.
    \end{proof}

    A compact description of our algorithm in pseudocode, bringing together the technique outlined in Section \ref{subsec:alg_strategy} with the specific choice of local approximation described in Section \ref{subsec:alg_description}, can be found in Algorithm \ref{algorithm}.

    \begin{algorithm}[H]
		\SetKwInOut{Input}{Input}
		\SetKwInOut{Output}{Output}
		
		\Input{order $L$, number of iterations $T\in\mathbb{Z}_{\geq 1}$, function $f\colon\mathbb{R}^m\to\mathbb{R}$, $f(\theta) = \langle\psi (\theta)|\mathcal{M}|\psi (\theta)\rangle$ (see Section \ref{subsec:alg_setting})}
		\Output{point $\theta_{T}\in\mathbb{R}^m$ in parameter space}

        \nl Determine the pairwise distinct elements $q_1 ,\dots , q_D$ of $\left\{-\frac{2\pi}{3},0,\frac{2\pi}{3}\right\}^m$ with the property that at most $L$ entries are non-zero

        \nl Pick an initial point $\theta_0\in\mathbb{R}^m$ in parameter space
  
		\For{$t=0,1,\dots ,T-1$}
		{
            \For{$j=1,2,\dots ,D$}
            {
                \nl Evaluate $f(\theta_t + q_j)$ using a quantum device
            }

            \nl Set $\tilde{f}_t(\theta)=\sum_{j=1}^{D} f(\theta_t + q_j ) \tilde K(q_j, \theta - \theta_t)$

            \nl Execute one (or several) classical optimization step(s) with respect to $\tilde{f}_t$ and obtain a point $\theta_{t+1}\in\mathbb{R}^m$ in parameter space
		}
  
		\nl \Return $\theta_{T}$
		
		\caption{\algoCap}
		\label{algorithm}
	\end{algorithm}

    In Algorithm \ref{algorithm}, the method for minimizing the classical approximations $\tilde{f}_t$ is intentionally left open in order to not limit the generality of the algorithm. A natural approach would be to carry out several gradient descent steps with respect to $\tilde{f}_t$, whose gradient can be computed efficiently.
    


    \subsection{Comparison with existing approaches}
    \label{subsec:alg_comparison}

    Since all algorithms investigated below are based on computing and subsequently optimizing local approximations (see also Section \ref{subsec:alg_strategy}) we limit ourselves to comparing the quality of the latter.
    It follows from Theorem \ref{theorem:good_approx}(\ref{theorem:good_approx_number_of_evaluations}) that the number of evaluations of $f$ needed by Algorithm \ref{algorithm} to compute the local approximation $\tilde{f}_t$ does not exceed (but is comparable to) the number of evaluations necessary to compute the $L$-th order Taylor polynomial about $\theta_t$ using the parameter-shift rules for arbitrary-order partial derivatives \cite{Mari_2021}, \cite{Wierichs2022generalparameter}, \cite{simon2024interpolating}. Moreover, Theorem \ref{theorem:good_approx}(\ref{theorem:good_approx_coincides_up_to_order_L}) guarantees that $\tilde{f}_t$ coincides with $f$ up to order $L$ locally around $\theta_t$, and the same holds true for the $L$-th order Taylor polynomial about $\theta_t$.
    However, in contrast to the $L$-th order Taylor polynomial about $\theta_t$, the {\emph{local}} approximation $\tilde{f}_t$ is guaranteed to preserve some {\emph{global}} properties of $f$, such as the periodicity of $f$ (since $\tilde{f}_t\in H$) and the values of $f$ along certain $L$-dimensional affine subspaces through $\theta_t$ (by Theorem \ref{theorem:good_approx}(\ref{theorem:good_approx_concides_on_subspaces})).
    A heuristic explanation for these advantages is that Algorithm \ref{algorithm} exploits the intrinsic structure of $H$, which is a finite-dimensional subspace of the much larger infinite-dimensional space of smooth $\mathbb{R}$-valued functions on $\mathbb{R}^m$, to which Taylor approximations generically apply.
    Moreover, in terms of the displacement $\vartheta$ from the development point, $L$-th order Taylor approximation comes with an error estimate of the form $O(\Vert\vartheta\Vert^{L+1})$, whereas the local approximations featuring in Algorithm \ref{algorithm} satisfy a better error estimate of the form $O(\Vert\vartheta\Vert^L\cdot\operatorname{dist}_{\Vert\cdot\Vert}(\vartheta , \mathcal{C}_L))$, see Theorem \ref{theorem:good_approx}(\ref{theorem:good_approx_error_estimate}).
    Since the approximation error of the $L$-th order Taylor approximation can be shown to not satisfy the latter estimate in general (even when restricting attention to functions in $H$), the error estimate in Theorem \ref{theorem:good_approx}(\ref{theorem:good_approx_error_estimate}) is in fact \emph{strictly} better than the corresponding error estimate for $L$-th order Taylor approximation.
    In the experiments in \cite{simon2024interpolating} the approximations featuring in Algorithm \ref{algorithm}, albeit with $\pm\tfrac{\pi}{2}$ parameter-shifts (instead of $\pm \tfrac{2\pi}{3}$ shifts in the present work), have outperformed Taylor approximation for $L = 1,2,3,4$.   
    Below, we will compare Algorithm \ref{algorithm} with gradient descent for $L=1$ and quantum analytic descent for $L=2$. The technical reasons for this specific choice of comparisons will be explained in the respective sections.

    \subsubsection{Comparison with gradient descent}\label{subsubsec:comparison_with_gradient_descent}

    For functions $f\colon\mathbb{R}^m\to\mathbb{R}$ of the form \eqref{eq:f}, gradient descent using the parameter-shift rules requires $2m$ circuit evaluations per iteration. Keeping track of the value of $f$ during the execution of gradient descent requires one additional circuit evaluation per iteration for a total of $2m+1$ evaluations, since computation of the gradient via the parameter-shift rules does not require knowledge of the value of $f$ at the development point.
    With the choice of hyperparameter $L=1$, Algorithm \ref{algorithm} requires $2m+1$ circuit evaluations per iteration by Theorem \ref{theorem:good_approx}(\ref{theorem:good_approx_number_of_evaluations}). Consequently, it makes sense to compare gradient descent to Algorithm \ref{algorithm} with $L=1$, because in this case both algorithms require the same number of circuit evaluations per iteration.
    
    A single iteration of both gradient descent and Algorithm \ref{algorithm} with $L=1$ corresponds to an optimization step with respect to an approximation that coincides with $f$ up to first order around the development point. However, the approximations of $f$ computed during the execution of Algorithm \ref{algorithm} are elements of the reproducing kernel Hilbert space $H$ (and consequently trigonometric polynomials), whereas gradient descent corresponds to optimization steps with respect to linear approximations of $f$. Moreover, it follows from Theorem \ref{theorem:good_approx}(\ref{theorem:good_approx_concides_on_subspaces}) that the approximations of $f$ computed during the execution of Algorithm \ref{algorithm} with $L=1$ coincide with $f$ on the (affine) lines through the respective development point which are parallel to the coordinate axes of $\mathbb{R}^m$. Informally speaking, this means that if we move away from the development point along a single coordinate direction, the approximation is exact, i.e., it coincides with $f$. Consequently, there are, in general, regions in parameter space arbitrarily close to the development point where the approximation computed during execution of Algorithm \ref{algorithm} with $L=1$ significantly outperforms the first order Taylor polynomial (which is the linear approximation relevant in the context of gradient descent). This fact is expressed by the estimate in Theorem \ref{theorem:good_approx}(\ref{theorem:good_approx_error_estimate}) which, as mentioned above, is strictly better than the corresponding estimate for Taylor approximation. In Section \ref{sec:experiments} of the present work we carry out extensive experiments where we do not only compare the quality of the local approximations, but also the performance of gradient descent and Algorithm \ref{algorithm} with $L=1$ when it comes to the actual goal of minimizing the objective function $f$.

    \subsubsection{Comparison with quantum analytic descent}\label{subsubsec:comparison_with_quantum_analytic_descent}
    
    Quantum analytic descent requires $2m^2+m+1$ circuit evaluations per iteration. With the choice of hyperparameter $L=2$, Algorithm \ref{algorithm} requires $2m^2 +1$ circuit evaluations per iteration by Theorem \ref{theorem:good_approx}(\ref{theorem:good_approx_number_of_evaluations}). So, Algorithm \ref{algorithm} with $L=2$ requires slightly fewer circuit evaluations than quantum analytic descent, but the number of circuit evaluations is the same asymptotically. Consequently, it makes sense to compare quantum analytic descent to Algorithm \ref{algorithm} with $L=2$.

    A single iteration of both quantum analytic descent and Algorithm \ref{algorithm} with $L=2$ corresponds to an optimization step with respect to an approximation that coincides with $f$ up to second order around the development point. The approximations of $f$ computed during execution of both algorithms are trigonometric polynomials that preserve the periodicity of $f$. Informally speaking, if we move away from the development point along {\emph{a single}} coordinate direction, the approximation computed by quantum analytic descent is exact, i.e., it coincides with $f$. However, for the approximations computed during execution of Algorithm \ref{algorithm} with $L=2$ we have the following stronger guarantee (by Theorem \ref{theorem:good_approx}(\ref{theorem:good_approx_concides_on_subspaces})): If we move away from the development point along {\emph{at most two}} coordinate directions simultaneously, the approximation is exact, i.e., it coincides with $f$.
    In fact, in terms of the displacement $\vartheta$ from the development point, the approximations computed during execution of quantum analytic descent can be shown to satisfy an error estimate of the form $O(\Vert\vartheta\Vert^2\cdot\operatorname{dist}_{\Vert\cdot\Vert}(\vartheta , \mathcal{C}_1))$, but, in general, \emph{not} the strictly stronger error estimate of the form $O(\Vert\vartheta\Vert^2\cdot\operatorname{dist}_{\Vert\cdot\Vert}(\vartheta , \mathcal{C}_2))$, which is satisfied by the local approximations computed during execution of Algorithm \ref{algorithm} with $L=2$, see Theorem \ref{theorem:good_approx}(\ref{theorem:good_approx_error_estimate}).
    In Section \ref{sec:experiments} of the present work we carry out extensive experiments where we do not only compare the quality of the local approximations, but also the performance of quantum analytic descent and Algorithm \ref{algorithm} with $L=2$ when it comes to the actual goal of minimizing $f$.

    \section{EXPERIMENTS}
	\label{sec:experiments}

In this section, we present a detailed experimental comparison of kernel descent with 
$L=1$ versus gradient descent, as well as kernel descent with 
$L=2$ versus quantum analytic descent.
As explained in Section \ref{subsec:alg_comparison}, these pairings are chosen because of their roughly equivalent computational efforts in terms of circuit evaluations per iteration.
The experiments are conducted in two main parts. The first part, described in Sections \ref{sec:sampling}, \ref{subsec:experiments_approx_quality}, and \ref{susbsec:experiments_performance}, is concerned with randomized circuits and observables. This approach allows us to consider a large, \emph{statistically significant} number of such circuits and observables.
The first part is structured as follows:

The algorithms will be compared with respect to the quality of the local approximations (Section \ref{subsec:experiments_approx_quality}) and their performance in terms of their ability to minimize the objective function (Section \ref{susbsec:experiments_performance}). 
Before that, we describe the experimental setup in Section \ref{sec:sampling}.

In the second part, which is comprised of Section \ref{subsec:experiments_spin_ring_hamiltonian}, we deal with a problem from the literature on VQAs.

\subsection{Setup for experiments with randomized circuits and observables}\label{sec:sampling}

Here we describe the setup for our experiments in Sections \ref{subsec:experiments_approx_quality} and \ref{susbsec:experiments_performance}. In these experiments we neither take measurement shot noise nor quantum hardware noise into account. That is, all compared algorithms are provided with exact values of the objective function $f$ computed via statevector simulation.

In order to ensure independence of our results from any particular choice of circuit or configuration, we repeatedly sample circuits, observables, and points in parameter space randomly in all our experiments. We proceed similarly to the experimental setup in \cite{simon2024denoising}: Points in parameter space are randomly sampled from the uniform distribution on $[-\pi , \pi )^m$. With the notation from Section \ref{subsec:alg_setting}, the observable $\mathcal{M}$ is an $n$-qubit Pauli randomly sampled from the uniform distribution on $\{I,X,Y,Z\}^{\otimes n}\setminus\{I^{\otimes n}\}$. The generators $G_1 , \dots , G_m$ of the parametrized gates $R_1 , \dots , R_m$ are sampled (independently) from the uniform distribution on $\{I,X,Y,Z\}^{\otimes n}\setminus\{I^{\otimes n}\}$, and the $n$-qubit unitaries $C_1 , \dots , C_{m+1}$ are randomly sampled (independently) via the same procedure by which the individual layers in the well-known quantum volume test \cite{quantum_volume} are sampled. For a visual representation of the entire circuit we refer to Figure \ref{fig:general_circuit}, the specific choice of the unitaries $C_j$ is explained and visualized in Figure \ref{fig:qv_layer_circuit}.

    \begin{FigureInColumn}
        \centering
        {\begin{quantikz}
                &[2mm] \gate[][0.7cm][2cm]{C_j}\qwbundle{n} & \qw
            \end{quantikz}} = \resizebox{0.5\columnwidth}{!}{
            \begin{quantikz}
                & \gate[5, nwires=3]{\tau_j}   & \gate[2, nwires=3]{U^{(j)}_1 \in SU(4)} &\qw\\[-10pt]
                & \qw                 & \qw & \qw \\[-10pt]
                & \lstick{$\vdots$}            & \lstick{$\vdots$} & &  \\[-3pt]
                & \qw                 &\gate[2, nwires=3]{U^{(j)}_{\left \lfloor{n/2}\right \rfloor} \in SU(4)}   &  \qw \\[-10pt]
                & \qw                & &  \qw      
            \end{quantikz}}    
        
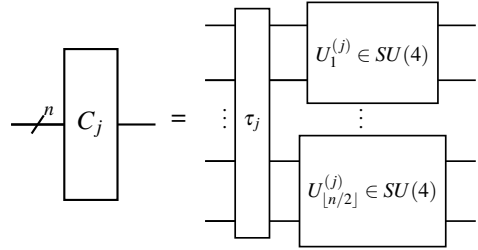
\captionof{figure}{This figure shows the specific choice of unitaries $C_j$ featuring in the experiments in Sections \ref{subsec:experiments_approx_quality} and \ref{susbsec:experiments_performance}: First, a permutation $\tau_j\in S_n$ is sampled uniformly at random. Subsequently, unitaries $U^{(j)}_1 ,\dots , U^{(j)}_{\left \lfloor{n/2}\right \rfloor }$ are sampled independently from the Haar measure on $SU(4)$. These are then applied to the qubit pairs $(\tau_j (1),\tau_j (2)),\dots ,(\tau_j (2{\left \lfloor{n/2}\right \rfloor }-1),\tau_j (2{\left \lfloor{n/2}\right \rfloor }))$ respectively. Note that this is precisely how the individual layers in the quantum volume test \cite{quantum_volume} are sampled.}
        \label{fig:qv_layer_circuit}
    \end{FigureInColumn}

\subsection{Quality of the local approximations}
\label{subsec:experiments_approx_quality}

Let $p\in\mathbb{R}^m$ be a point in parameter space, let $g\colon\mathbb{R}^m\to\mathbb{R}$ be a local approximation of $f$ about $p$.
    In this subsection we will provide comparisons of the two algorithm pairs kernel descent with $L=1$ versus gradient descent and kernel descent with $L=2$ versus quantum analytic descent with respect to three notions of {\emph{error}} at a point $\theta\in\mathbb{R}^m$ that is \enquote{sufficiently close} to the development point $p$:

     \begin{itemize}
        \item The {\emph{value approximation error}} $$|f(\theta)-g(\theta)|\,,$$ which is the absolute value of the difference between the value of the true function and that of the local approximation.
        \item The {\emph{gradient approximation error in Euclidean norm}} 
        $$\Vert \nabla{f}(\theta)-\nabla{g}(\theta)\Vert\,,$$
        which is the Euclidean norm of the difference between the gradient of the true function and that of the local approximation (recall that this is relevant, since all compared algorithms -- either explicitly or implicitly -- apply gradient descent to the respective local approximations in an inner minimization loop). 
        \item The {\emph{gradient approximation error in cosine distance}}. That is, the cosine distance between the gradient of the true function and that of the local approximation
        $$1-\frac{\nabla{f}(\theta)\bullet \nabla{g}(\theta)}{\Vert \nabla{f}(\theta)\Vert \Vert \nabla{g}(\theta)\Vert},$$
        where $\bullet$ denotes the Euclidean dot product.
        The cosine distance allows us to only compare the error in the {\emph{direction}} of the gradient of the approximation, ignoring the error in the {\emph{length}}. This is of particular relevance for the experiments in Section \ref{subsubsec:experiments_kd_vs_gd}.
        (In order to prevent numerical instability and division by $0$ in our experiments, a small positive constant was added to both factors appearing in the denominator.)
    \end{itemize}

These three measures lead to six sets of comparisons (three comparisons per one of the two algorithm pairs), which will each be visualized via two kinds of scatter plots that highlight different aspects of the comparison. Since the process of sampling, error computation, plotting is the same for all six experiments, we will describe the procedure once for the first experiment, kernel descent with $L=1$ versus gradient descent with respect to the value approximation error. The others work analogously, just replace $L=1$ with $L=2$ and gradient descent with quantum analytic descent, and/or exchange the value approximation error for any of the other two error measures. The results of the experiments are described in Section \ref{subsubsec:experiments_approx_capabilities_results}.   

\paragraph{Sampling.}
We denote the value approximation error at $\theta$ by $\mathcal{E}_{\theta}(f, g)$.
Now fix $n=m=10$ and repeat the following $N=25,000$ times:

 \begin{enumerate}
        \item A circuit, a point $p\in\mathbb{R}^m$ and an observable $\mathcal{M}$ are sampled randomly as explained in Section \ref{sec:sampling}, and $f$ is defined as in Section \ref{subsec:alg_setting}.
        \item Determine the local approximations $f_{\operatorname{GD}}$ and $f_{\operatorname{KD}}$ to $f$ about $p$ computed by gradient descent and kernel descent with $L=1$, respectively. 
        \item Randomly sample a point $\theta$ in the vicinity of $p$; specifically, we set $\theta = p+\vartheta$, where $\vartheta$ is sampled from the uniform distribution on $[-0.5,0.5)^m$.
        \item Compute the distance $d_{\theta}:=\Vert \theta -p\Vert = \Vert\vartheta\Vert$ to the development point $p$.
        \item Compute the errors $\mathcal{E}_{\theta}(f, f_{\operatorname{GD}})$ and $\mathcal{E}_{\theta}(f, f_{\operatorname{KD}})$.
    \end{enumerate}
This process yields $25,000$ data points of the form $[d_{\theta}, \mathcal{E}_{\theta}(f, f_{\operatorname{GD}}), \mathcal{E}_{\theta}(f, f_{\operatorname{KD}})]$ which we use to create the two aforementioned types of scatter plots.

\paragraph{The first type of scatter plot.} 
  The goal of the first type of scatter plot is to directly compare the errors made by gradient descent and kernel descent. 
  To this end, we plot the $N=25,000$ points of the form $[\mathcal{E}_{\theta}(f, f_{\operatorname{GD}}), \mathcal{E}_{\theta}(f, f_{\operatorname{KD}})]$ in a two-dimensional coordinate system. 
The $x$-coordinate represents the error made by gradient descent, while the $y$-coordinate represents the error made by kernel descent with $L=1$. 
As a consequence, points above the diagonal correspond to outcomes where gradient descent outperforms kernel descent, while points below the diagonal indicate the opposite. 

Points on or above the diagonal are marked in blue, and points below the diagonal are marked in red. Our measure of success for kernel descent is the percentage of points that lie below the diagonal. However, this does not quantify the margin by which one algorithm typically outperforms the other. To provide an indication of this margin, we plot a line through the origin, with its polar angle being the average of the polar angles of the $N=25,000$ plotted points. 
  
\paragraph{The second type of scatter plot.}     

The second type of scatter plot focuses on illustrating the dependence of the approximation errors on the distance to the development point $p$, which is not represented in the first type of plot. From each of the $N=25,000$ data points $[d_{\theta}, \mathcal{E}_{\theta}(f, f_{\operatorname{GD}}), \mathcal{E}_{\theta}(f, f_{\operatorname{KD}})]$, 
we derive two points: $[d_{\theta}, \mathcal{E}_{\theta}(f, f_{\operatorname{GD}})]$ and $[d_{\theta}, \mathcal{E}_{\theta}(f, f_{\operatorname{KD}})]$ and plot them in a two-dimensional coordinate system, marked in blue and red, respectively. This results in a scatter plot of $2N=50,000$ points.

Generally speaking, the closer a point is to the $x$-axis, the better the approximation performance of the corresponding algorithm. 
    As a visual aid, we fit a curve of the form $x\mapsto c \cdot x^2$ to each point cloud (minimizing the mean squared error), where $c$ is the parameter being fit. We chose the exponent $2$, since the local value approximation error of both algorithms is $O(\Vert\theta - p\Vert^{2})$, see Theorem \ref{theorem:good_approx}(\ref{theorem:good_approx_coincides_up_to_order_L}).
    
    For the five remaining combinations of algorithm pair and error notion, the exponent is likewise chosen according to the estimates in Theorem \ref{theorem:good_approx}(\ref{theorem:good_approx_coincides_up_to_order_L}).
    As a result of this construction, the curve that lies below the other indicates that the corresponding algorithm performs better in terms of approximation quality.

    \subsubsection{Results}
    \label{subsubsec:experiments_approx_capabilities_results}

    Based on the criteria described above, kernel descent outperformed in all twelve scatter plots.

    In the six scatter plots of the first type (see Figure \ref{fig:scatter_typeI}),  kernel descent won all comparisons, with at least $58\%$ of the points ---and often significantly more--- marked in red in each plot.

    In the six scatter plots of the second type (see Figure \ref{fig:scatter_typeII}), kernel descent won all comparisons, as the red curve lies below the blue curve in every plot. 
    
    \begin{figure*}[htbp]
        \captionsetup{width=1.0\linewidth}
    	\centering
        \subcaptionbox[width=0.01\linewidth]{Comparison between gradient descent and kernel descent with $L=1$ wrt.\ value approximation error. Kernel descent outperformed gradient descent in $63.7\%$ of the outcomes.}{\includegraphics[width=2.13in]{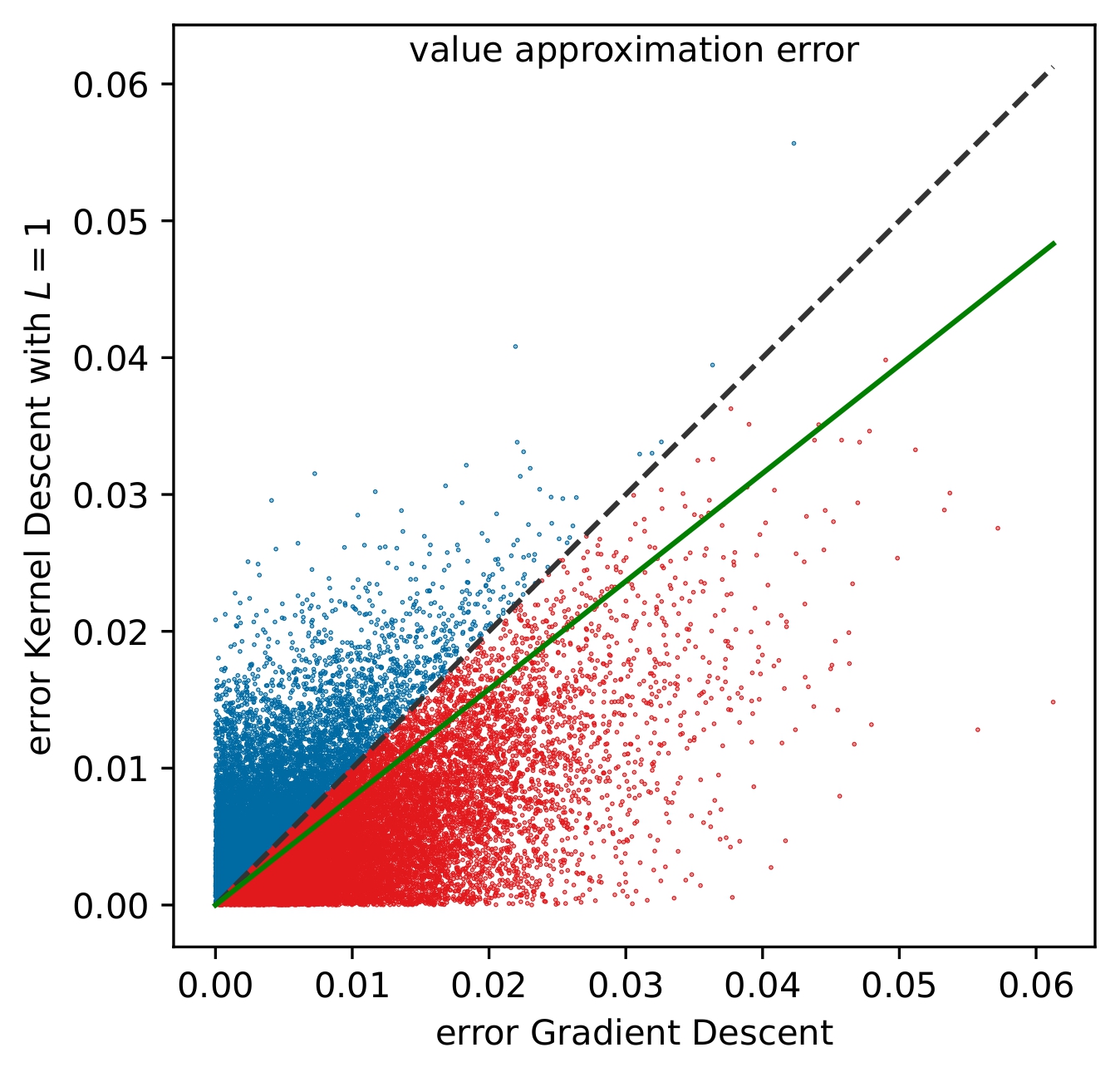}}\hspace{1.0em}%
        \subcaptionbox{Comparison between gradient descent and kernel descent with $L=1$ wrt.\ gradient approximation error in Euclidean norm. Kernel descent outperformed gradient descent in $76.2\%$ of the outcomes.}{\includegraphics[width=2.13in]{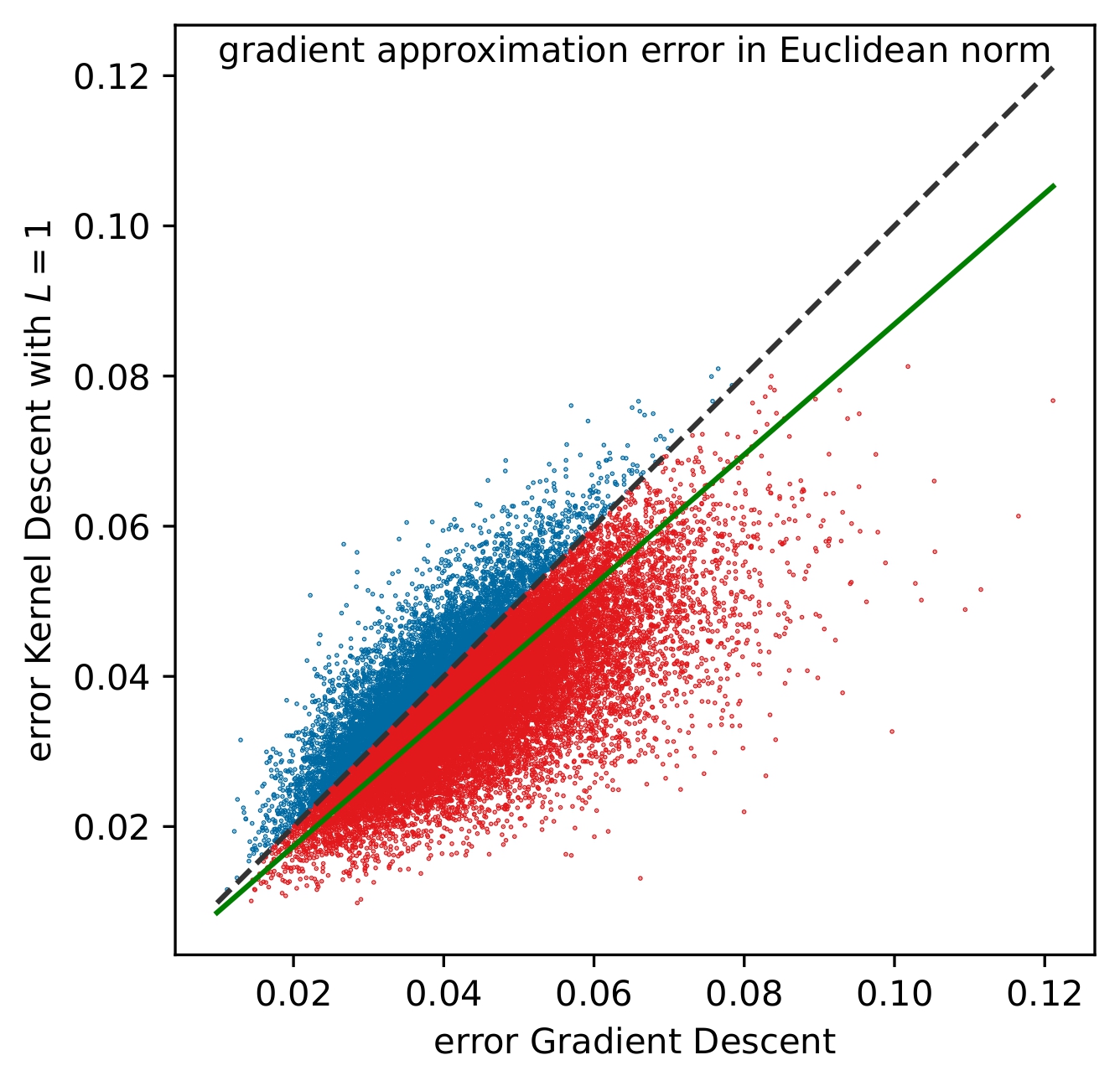}}\hspace{1.0em}%
        \subcaptionbox{Comparison between gradient descent and kernel descent with $L=1$ wrt.\ gradient approximation error in cosine distance. Kernel descent outperformed gradient descent in $71.3\%$ of the outcomes.}{\includegraphics[width=2.13in]{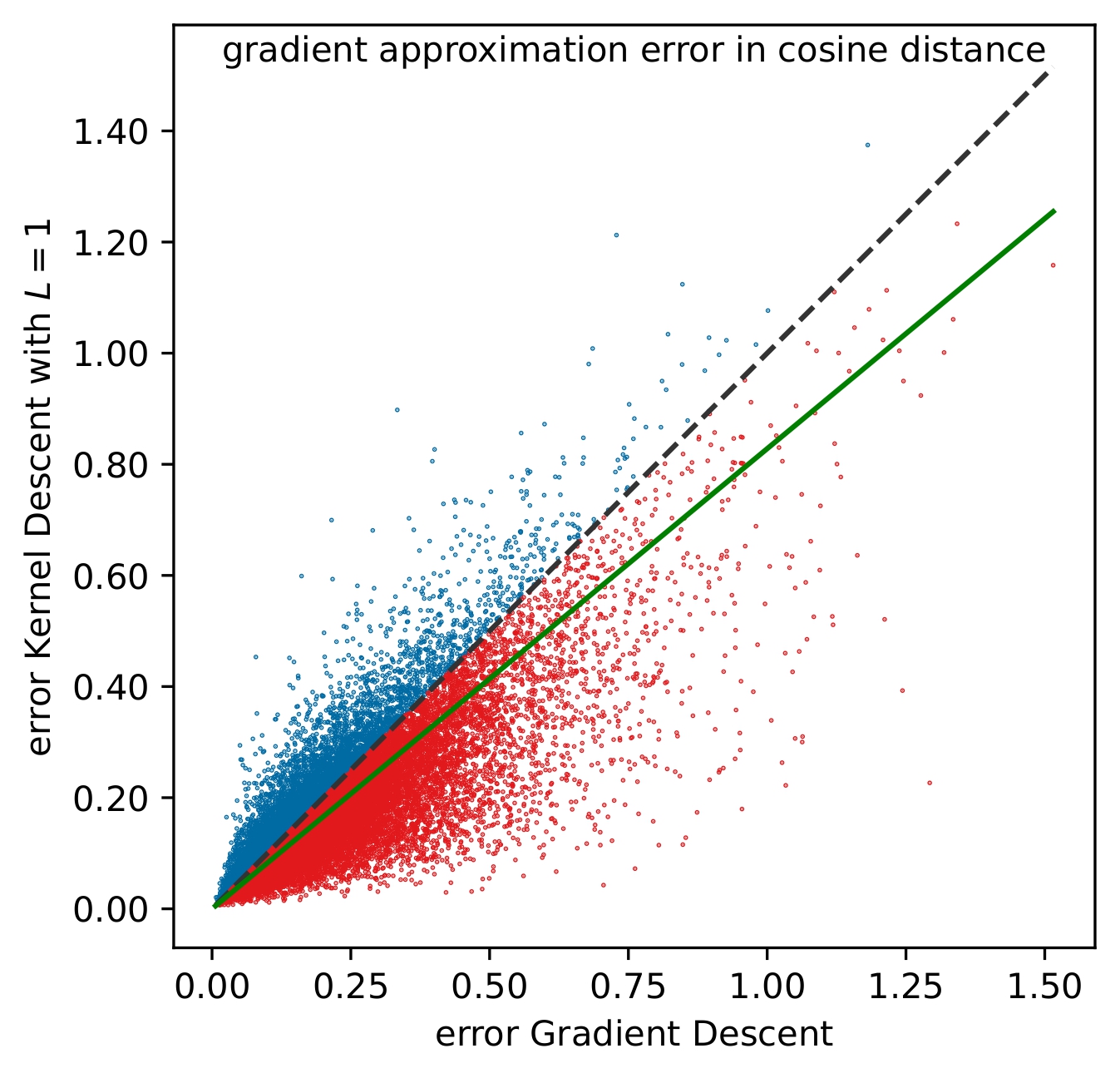}}
    	
    	\bigskip
    	
        \subcaptionbox{Comparison between quantum analytic descent and kernel descent with $L=2$ wrt.\ value approximation error. Kernel descent outperformed quantum analytic descent in $58.7\%$ of the outcomes.}{\includegraphics[width=2.13in]{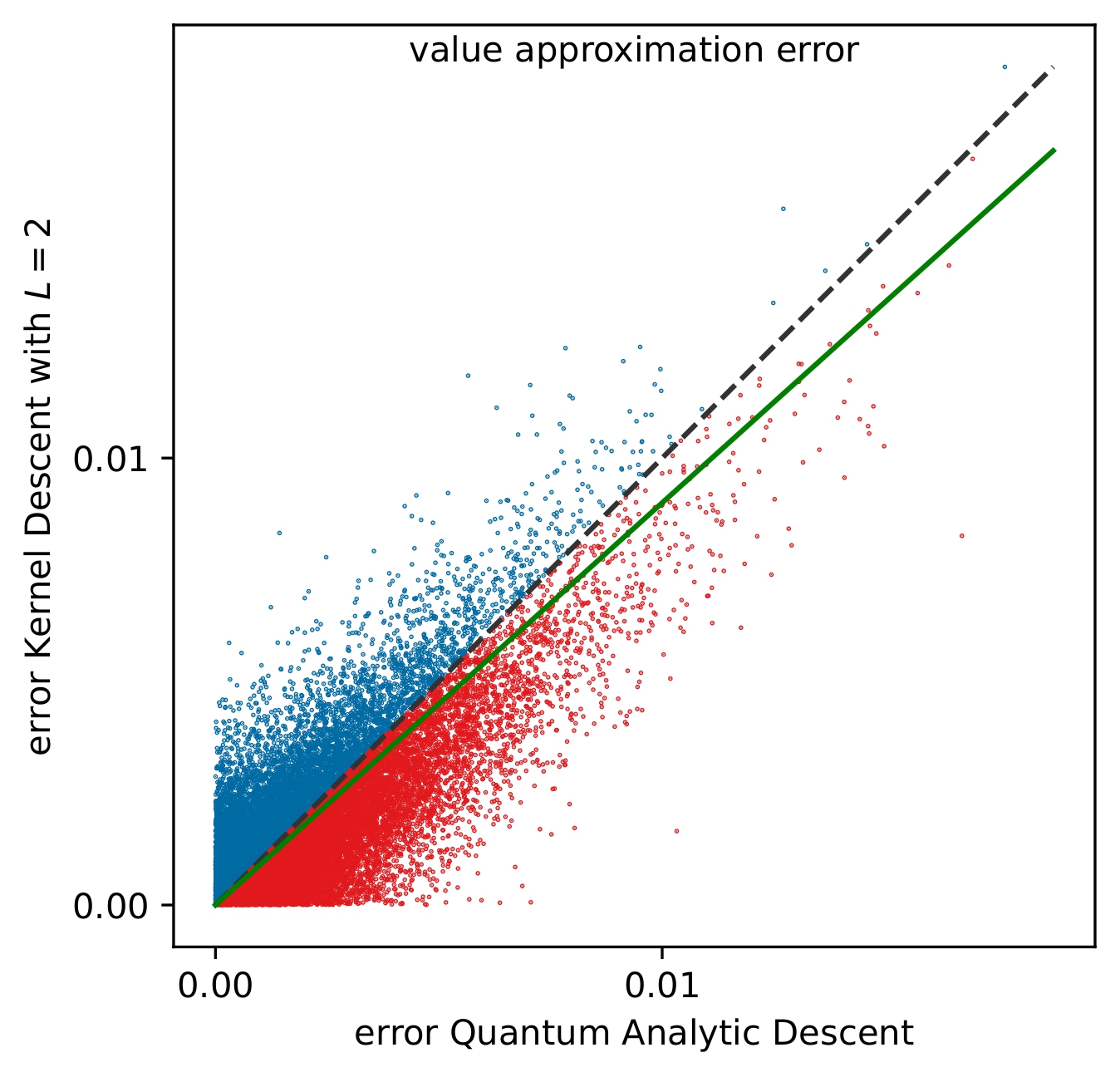}}\hspace{1.0em}%
        \subcaptionbox{Comparison between quantum analytic descent and kernel descent with $L=2$ wrt.\ gradient approximation error in Euclidean norm. Kernel descent outperformed quantum analytic descent in $79.9\%$ of the outcomes.}{\includegraphics[width=2.13in]{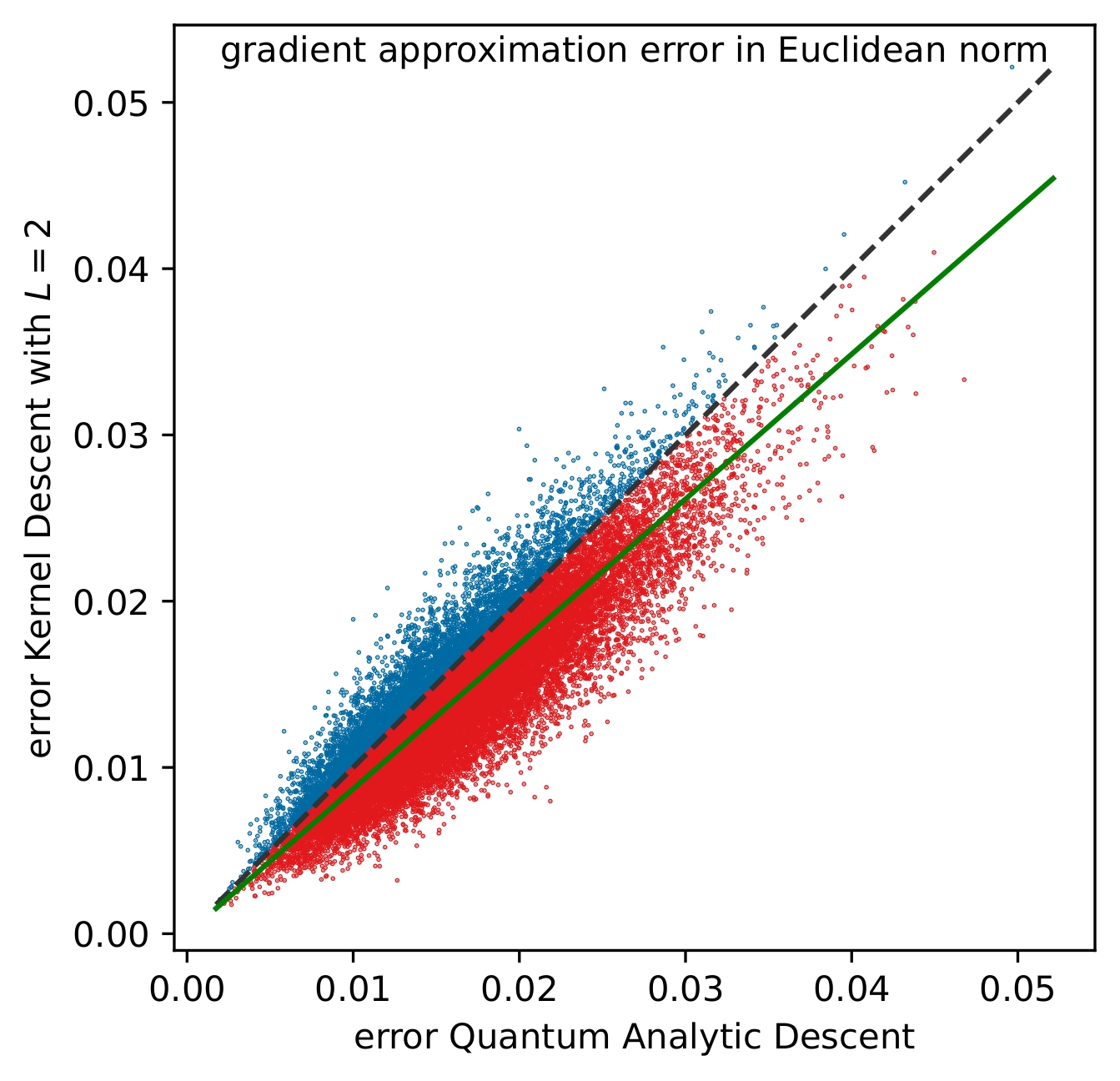}}\hspace{1.0em}%
        \subcaptionbox{Comparison between quantum analytic descent and kernel descent with $L=2$ wrt.\ gradient approximation error in cosine distance. Kernel descent outperformed quantum analytic descent in $78.5\%$ of the outcomes.}{\includegraphics[width=2.13in]{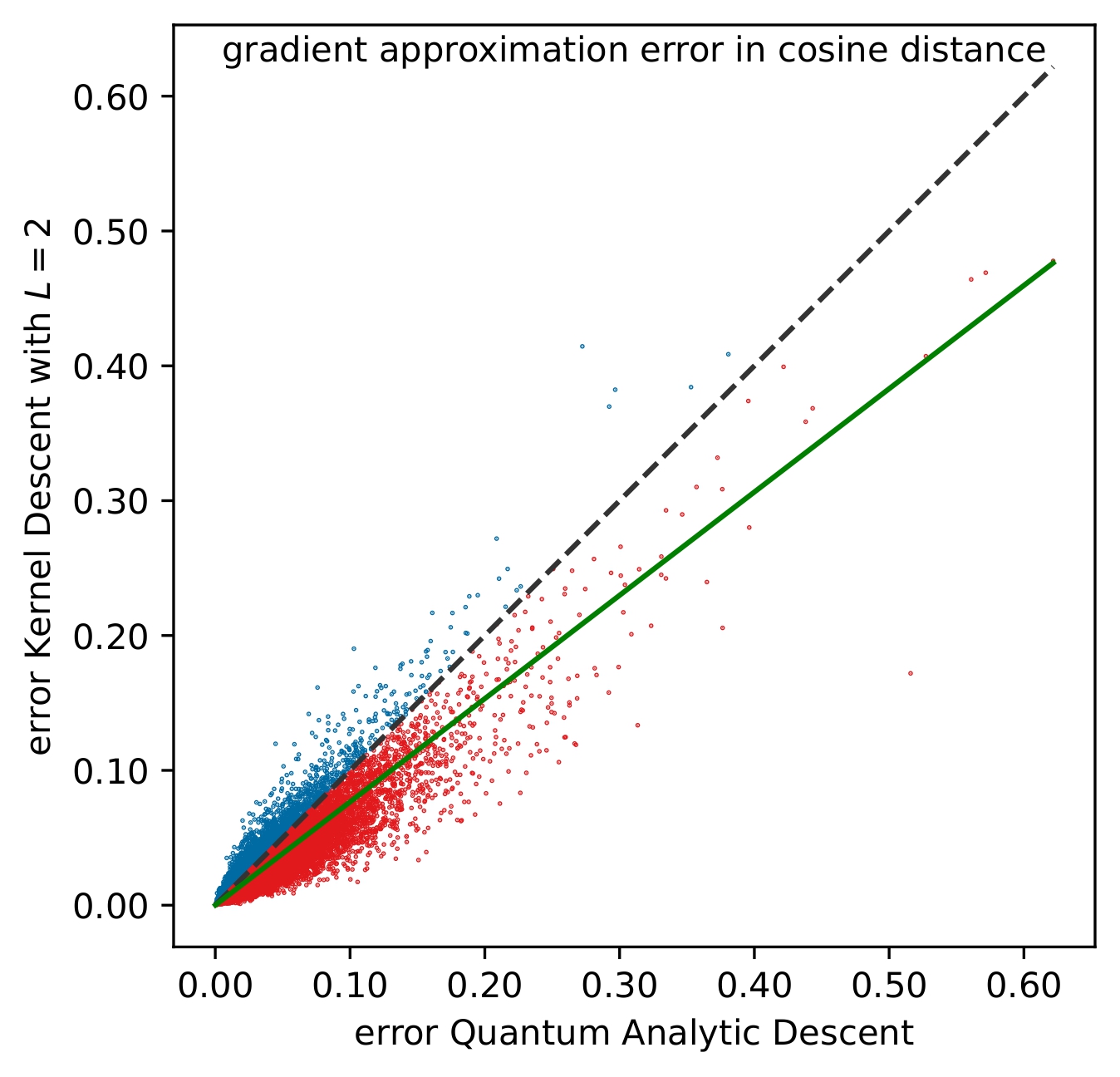}}
     
    	\caption{This figure shows the scatter plots of the first type from Section \ref{subsec:experiments_approx_quality}. Points below the diagonal (colored red) correspond to outcomes where kernel descent outperformed the algorithm it was being compared to. The green line is the line through the origin, whose polar angle is the average of the polar angles of the $N = 25,000$ points appearing in the scatter plot. The fact that the green lines are running below the diagonals is a further indication that kernel descent outperformed the algorithm it was being compared to; see Section \ref{subsec:experiments_approx_quality} for details.}
    	\label{fig:scatter_typeI}
    \end{figure*}

        \begin{figure*}[htbp]
        \captionsetup{width=1.0\linewidth}
    	\centering
        \subcaptionbox[width=0.01\linewidth]{Comparison between gradient descent and kernel descent with $L=1$ wrt.\ value approximation error. We fit curves of the form $x\mapsto c\cdot x^2$ to both point clouds.}{\includegraphics[width=2.13in]{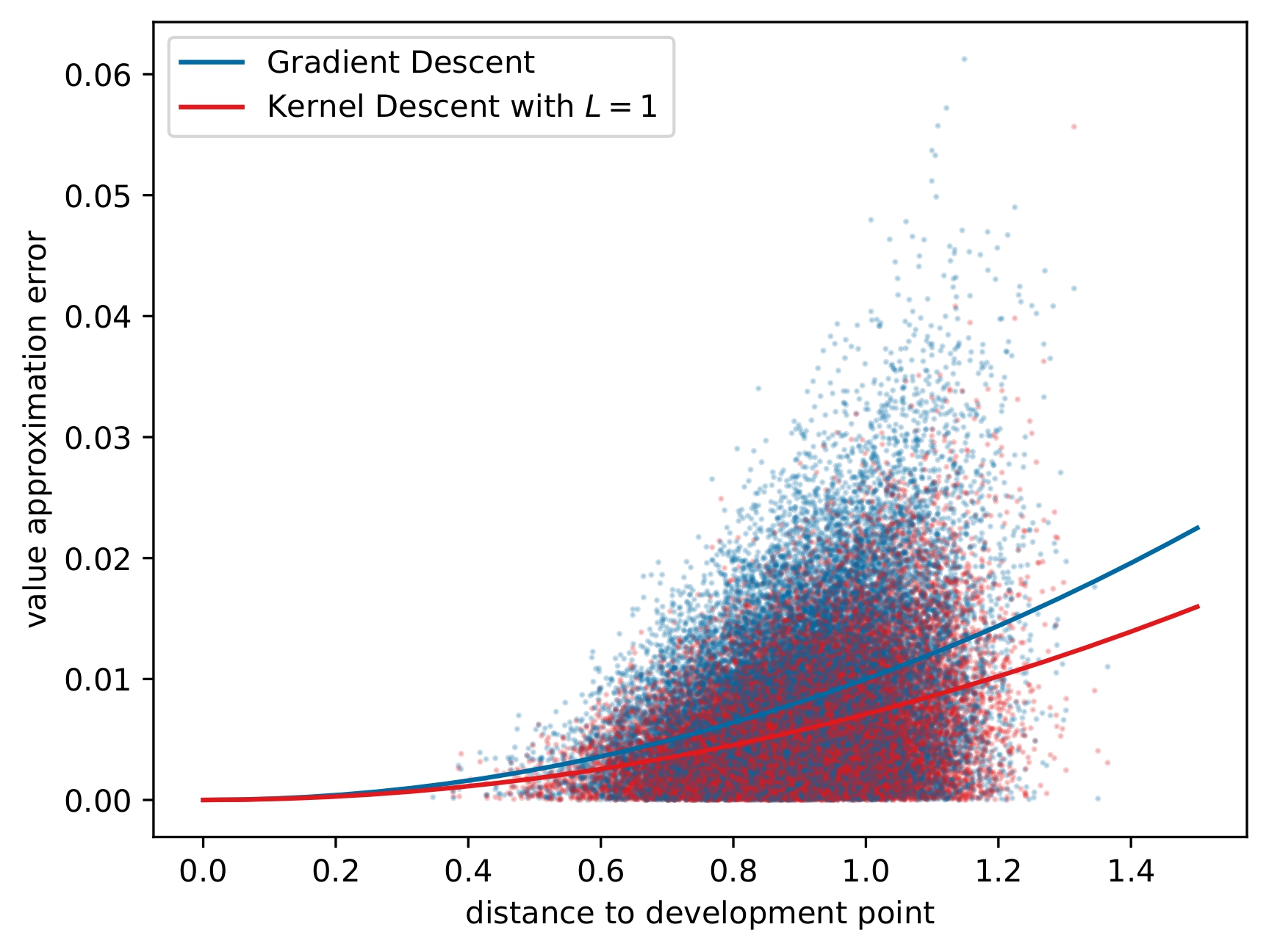}}\hspace{1.0em}%
        \subcaptionbox{Comparison between gradient descent and kernel descent with $L=1$ wrt.\ gradient approximation error in Euclidean norm. We fit curves of the form $x\mapsto c\cdot x^1$ to both point clouds.}{\includegraphics[width=2.13in]{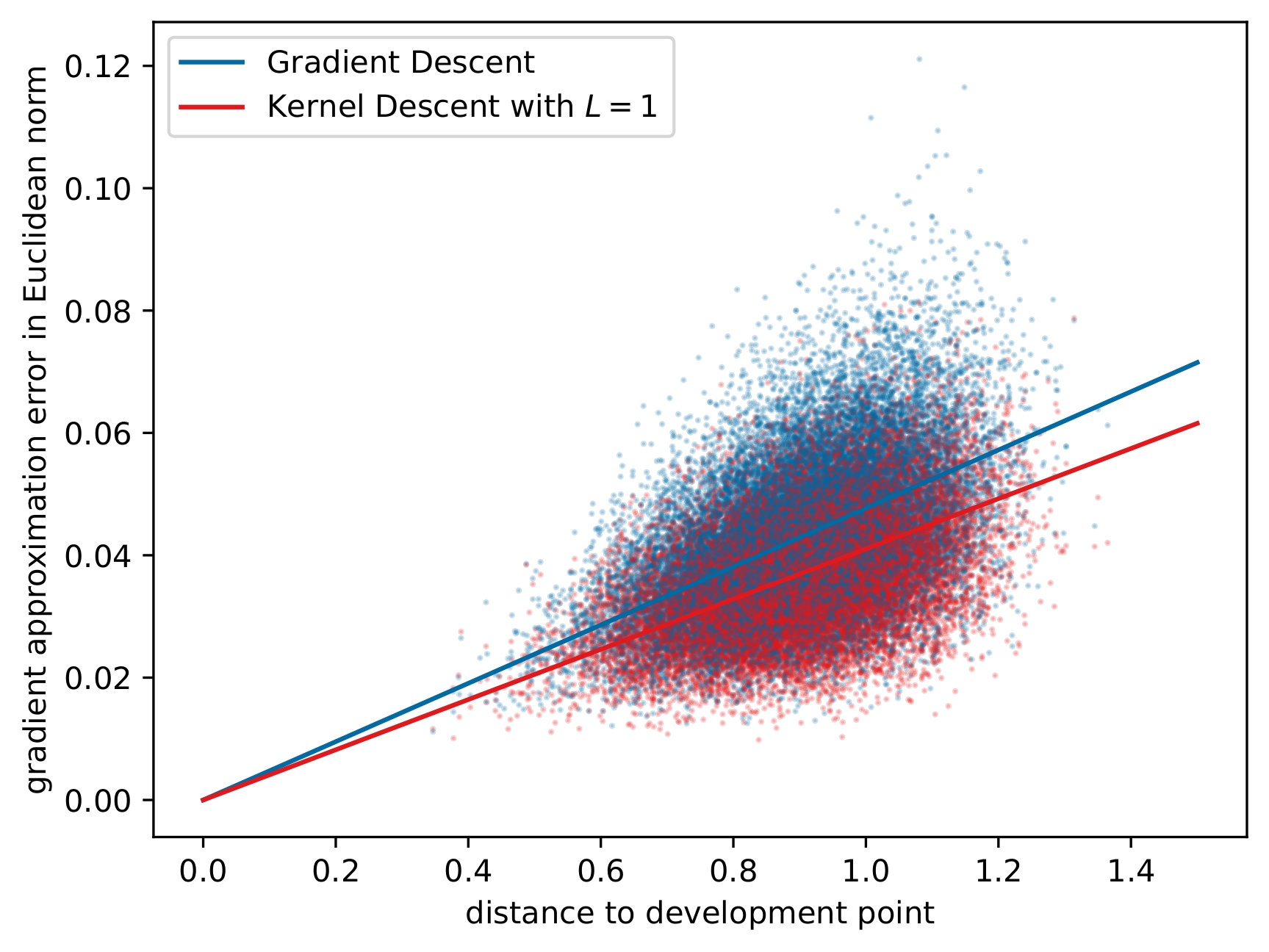}}\hspace{1.0em}%
        \subcaptionbox{Comparison between gradient descent and kernel descent with $L=1$ wrt.\ gradient approximation error in cosine distance. We fit curves of the form $x\mapsto c\cdot x^2$ to both point clouds.}{\includegraphics[width=2.13in]{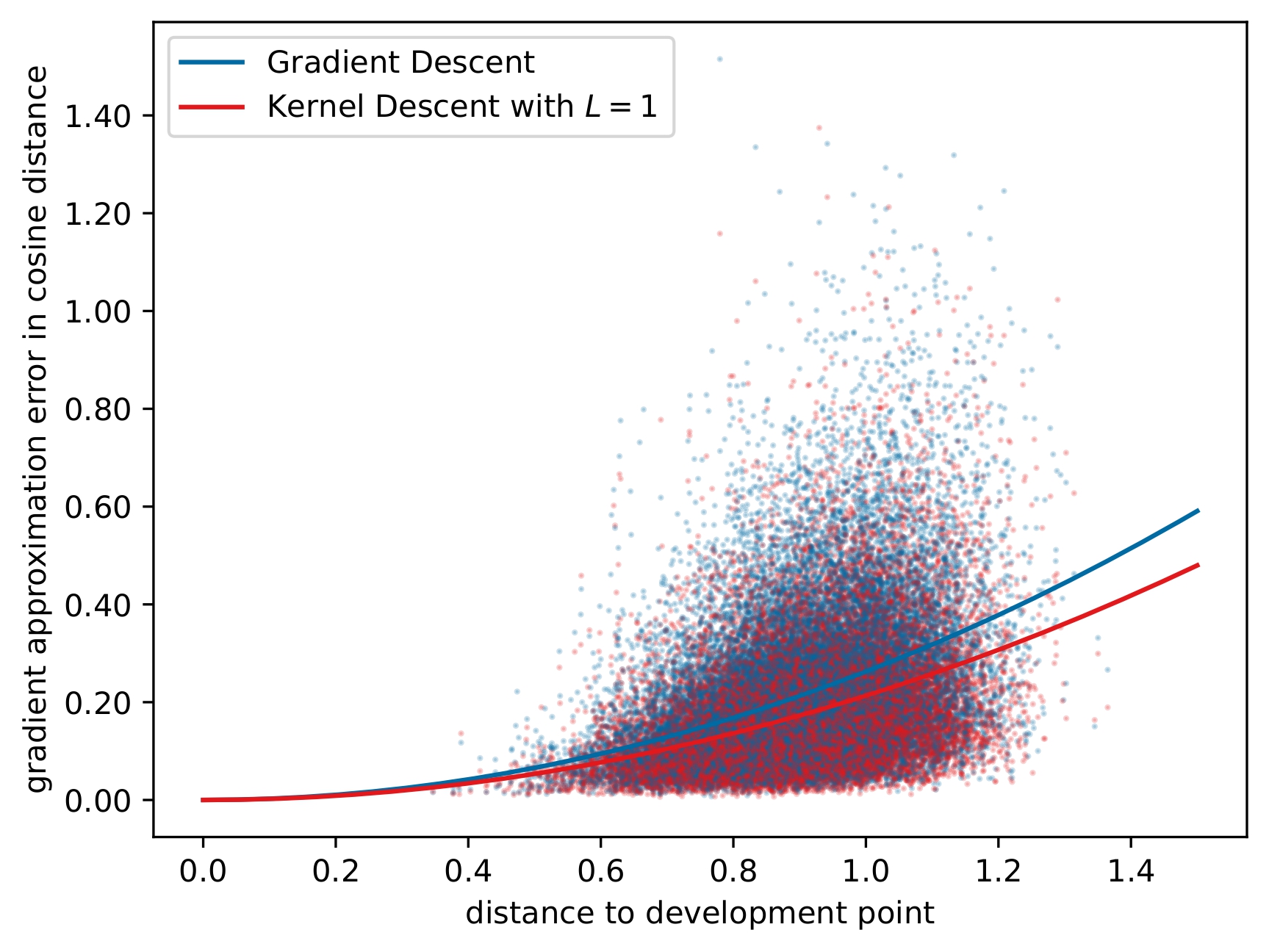}}
    	
    	\bigskip
    	
        \subcaptionbox{Comparison between quantum analytic descent and kernel descent with $L=2$ wrt.\ value approximation error. We fit curves of the form $x\mapsto c\cdot x^3$ to both point clouds.}{\includegraphics[width=2.13in]{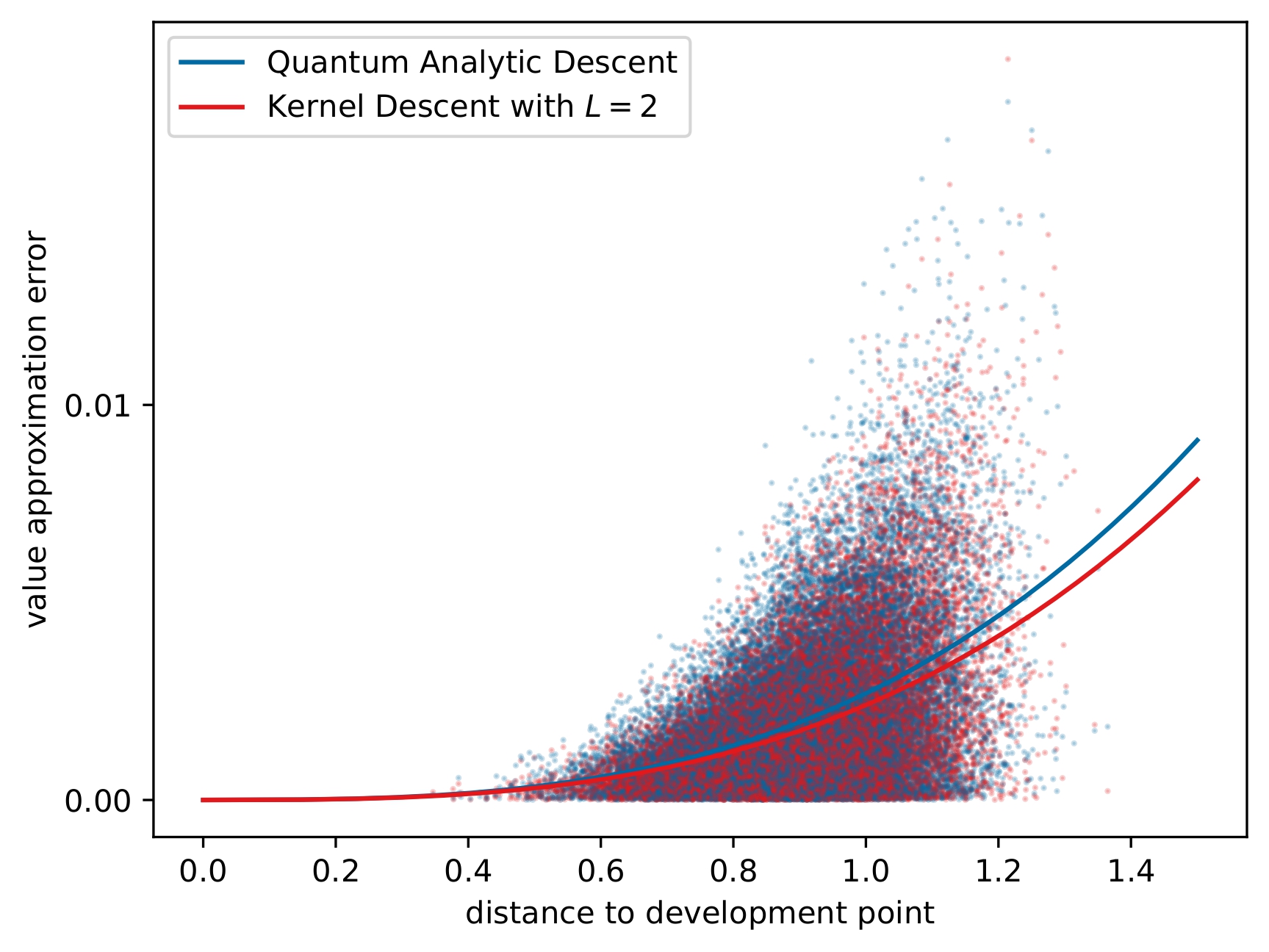}}\hspace{1.0em}%
        \subcaptionbox{Comparison between quantum analytic descent and kernel descent with $L=2$ wrt.\ gradient approximation error in Euclidean norm. We fit curves of the form $x\mapsto c\cdot x^2$ to both point clouds.}{\includegraphics[width=2.13in]{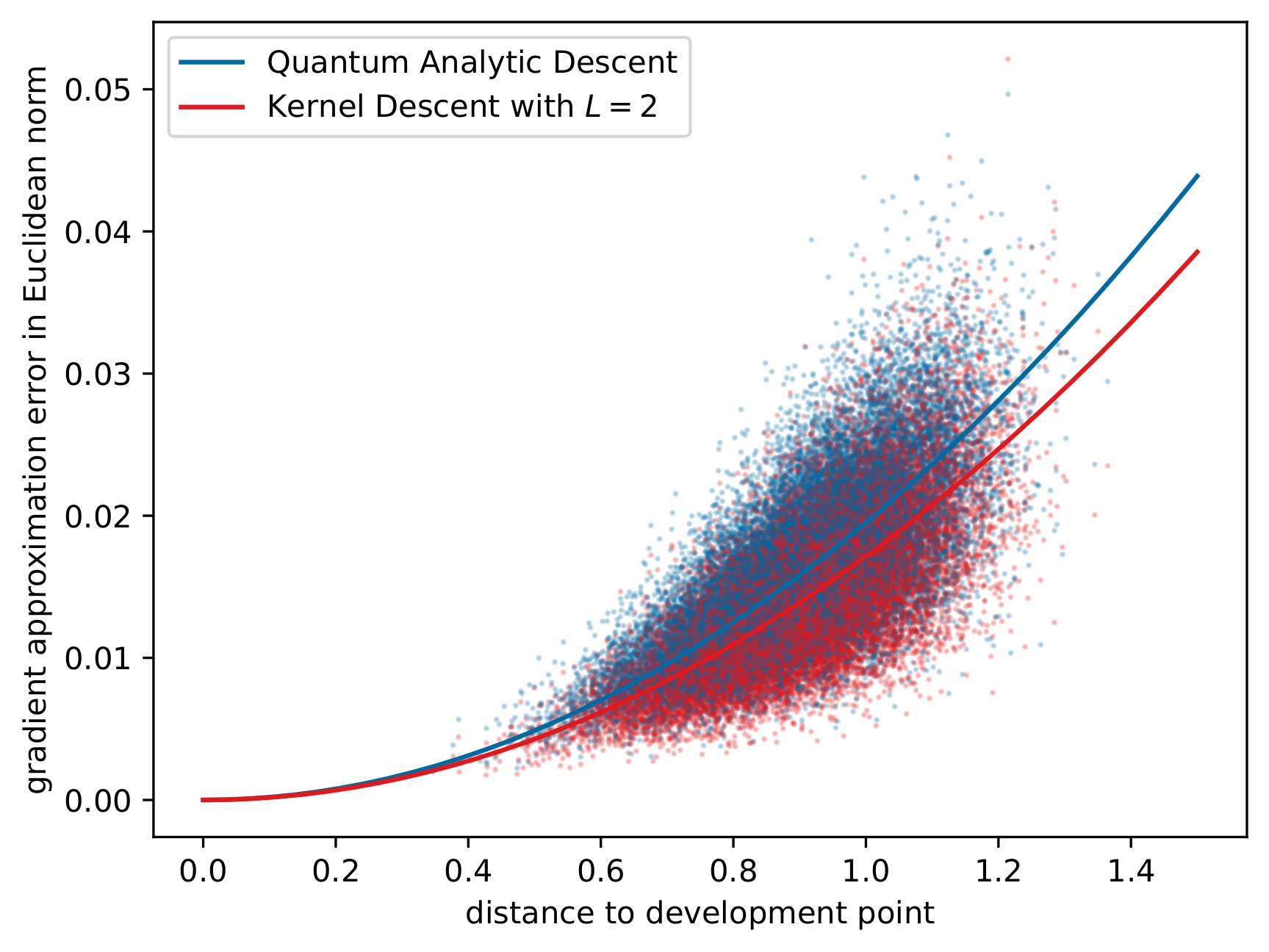}}\hspace{1.0em}%
        \subcaptionbox{Comparison between quantum analytic descent and kernel descent with $L=2$ wrt.\ gradient approximation error in cosine distance. We fit curves of the form $x\mapsto c\cdot x^4$ to both point clouds.}{\includegraphics[width=2.13in]{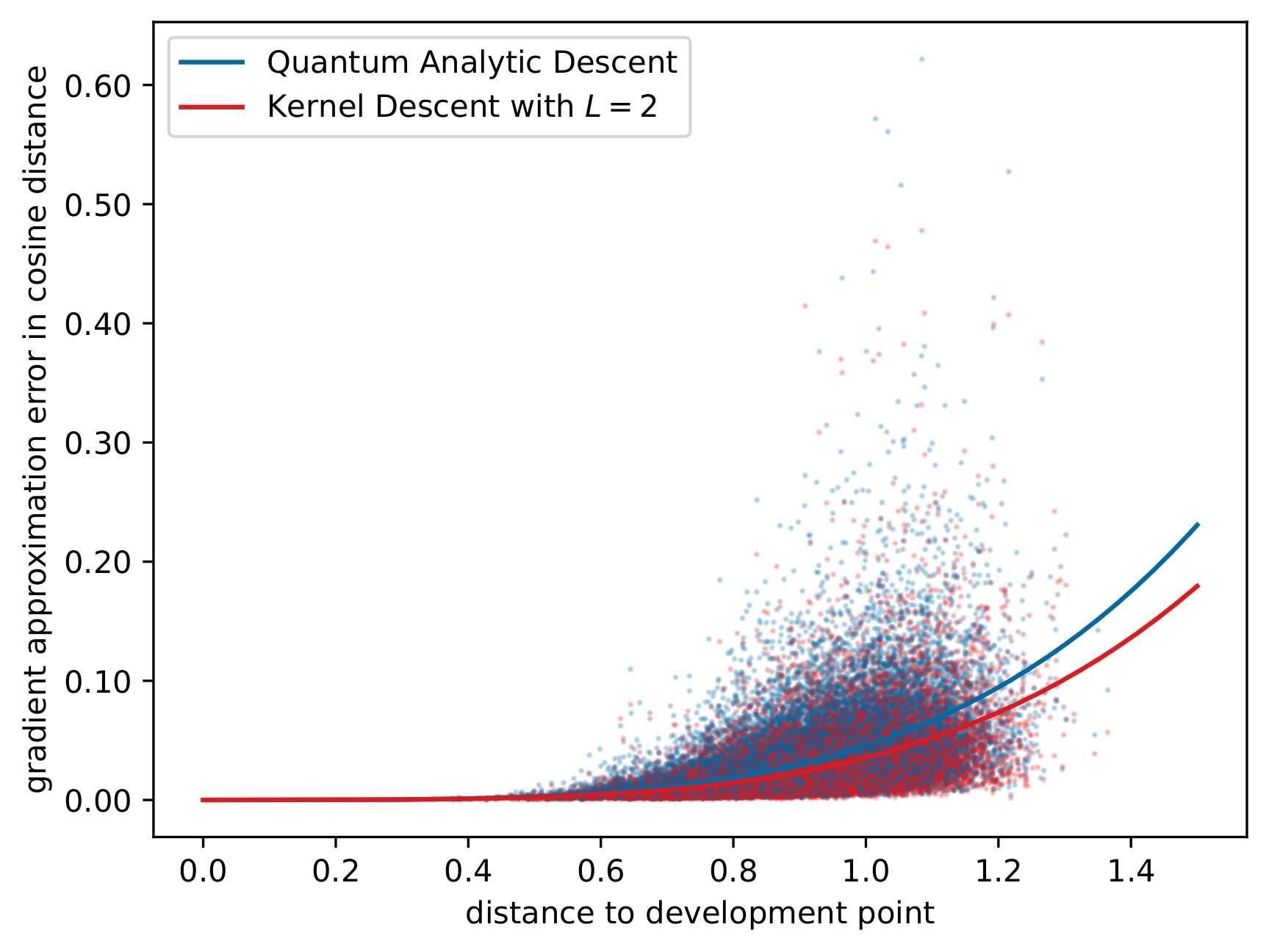}}
     
    	\caption{This figure shows the scatter plots of the second type from Section \ref{subsec:experiments_approx_quality}. The red curve running below the blue curve indicates that kernel descent outperformed the algorithm it was being compared to; see Section \ref{subsec:experiments_approx_quality} for details.}
    	\label{fig:scatter_typeII}
    \end{figure*}

    \subsection{Performance of the algorithms}
    \label{susbsec:experiments_performance}
    Here we compare the algorithms with respect to their ability to minimize the objective function.
    The comparisons of kernel descent with $L=1$ versus gradient descent and kernel descent with $L=2$ versus quantum analytic descent follow a similar pattern, albeit with some key differences which are imposed by the higher computational demands per iteration of the second pair of algorithms.

    \subsubsection{Kernel descent versus gradient descent}
    \label{subsubsec:experiments_kd_vs_gd}  
    To level the playing field, we carry out kernel descent with $L=1$ and gradient descent in their most basic form without specialized enhancements such as learning rate adaptation. In particular, we assume that gradient descent operates with a fixed learning rate $\alpha >0$. Moreover, we rescale the steps of the inner optimization loop of kernel descent so that their combined length equals that of the corresponding gradient descent step. We will now describe the experiments in detail and then conclude with our observations.
    
    \paragraph{Circuit sampling and running the algorithms.}

    In order to keep the computational load manageable, we set $n=m=8$ and choose $T=20$ iterations per algorithm. 
    We will compare the algorithms using $N=5,000$ randomly sampled circuits, and for each circuit we will test three different learning rates $\alpha_1 =7.0$, $\alpha_2 =8.5$ and $\alpha_3 =10.0$. 
    Each of the $N=5,000$ test runs will follow the scheme outlined below.

    \begin{enumerate}
        \item \textbf{Circuit sampling:} A circuit, an initial point $\theta_0\in\mathbb{R}^m$ and an observable $\mathcal{M}$ are sampled randomly as explained in Section \ref{sec:sampling}, and $f$ is defined as in Section \ref{subsec:alg_setting}.
        \item \textbf{Gradient descent runs:} Starting from initial point $\theta_0$, gradient descent with $T=20$ steps is executed three times with the three different learning rates: $\alpha_1$,$\alpha_2$ and $\alpha_3$. This yields three sequences of points in parameter space which we denote by $$\Theta^{\operatorname{GD}}_i\, :=\,\Theta^{(\operatorname{GD}, \alpha_i)}\, :=\, \left(\theta_{0}, \theta_{1}^{(\operatorname{GD}, \alpha_i)},  \dots , \theta_{20}^{(\operatorname{GD}, \alpha_i)}\right)\,,$$ where $i=1,2,3$.
        \item \textbf{Kernel descent runs:} Starting from initial point $\theta_0$, kernel descent with $L=1$ and $T=20$ steps is executed three times:
        \begin{itemize}
            \item Denote the current point in parameter space after $t$ iterations of kernel descent by $\theta_t$, and the local approximation of $f$ around $\theta_t$ by $\tilde{f}_t$. In the inner optimization loop we execute $k=100$ gradient descent steps with respect to $\tilde{f}_t$.
            \item In the three executions of kernel descent, the learning rates for the inner optimization loops are $\alpha_1 / k = 0.07$, $\alpha_2 / k = 0.085$ and $\alpha_3 / k = 0.1$. This rescaling of the learning rates accounts for the number of gradient descent steps taken in the inner loop.
            \item  At each of the $k$ gradient descent steps of the inner optimization loop starting at $\theta_t$, the gradient of the local approximation $\tilde{f}_t$ is rescaled to have length $\Vert\nabla f (\theta_t)\Vert$ (numerical instability and division by $0$ are prevented by adding a small positive constant to the occurring denominators). Since $\nabla f (\theta_t) = \nabla \tilde{f}_t (\theta_t)$ by Theorem \ref{theorem:good_approx}(\ref{theorem:good_approx_coincides_up_to_order_L}), computation of $\nabla f (\theta_t)$ does not require any additional circuit evaluations.
        \end{itemize}
        We thus obtain three sequences of points in parameter space, one per learning rate, which we denote by $$\Theta^{\operatorname{KD}}_i\, :=\,\Theta^{(\operatorname{KD}, \alpha_i)}\, :=\,\left(\theta_{0}, \theta_{1}^{(\operatorname{KD}, \alpha_i)},  \dots , \theta_{20}^{(\operatorname{KD}, \alpha_i)}\right)\,,$$ where $i=1,2,3$.
    \end{enumerate}

    \noindent The final outcome of this procedure is a family $$\Theta\,:=\,\left(\Theta^{\operatorname{A}}_i\right)_{\operatorname{A}\in\{\operatorname{GD}, \operatorname{KD}\},i\in\{1,2,3\}}$$ of six point sequences in parameter space ---one per combination of algorithm and learning rate--- for each of the $N=5,000$ randomly sampled circuits.

\paragraph{Normalization.}
Since the circuit, initial point, and observable were sampled randomly in each of the $N=5,000$ test runs, we cannot directly compare the values of the objective function $f$ between different runs.
We remedy this by normalizing the $f$-values of the point sequences we computed. Specifically, for each of the $N=5,000$ families $\Theta$, the following is carried out:
 Let $v$ be the minimal value of $f$ attained at a point in any of the sequences in $\Theta$. That is, 
        $
            v = \min (\{f(\theta_{0}),f(\theta_{t}^{(\operatorname{GD}, \alpha_i)}), f(\theta_{t}^{(\operatorname{KD}, \alpha_i)}) | t\in \{1,\dots , 20\} , i\in \{1,2,3\}\}).
        $
In the pathological case where $v$ is not smaller than $f(\theta_0)$, one would discard the family $\Theta$ and compute another one. However, this did not happen in our experiments.
Now let 
\begin{equation}\label{eq:l}
    \ell\colon\mathbb{R}\to\mathbb{R},\ x\mapsto \tfrac{x-v}{f(\theta_0)-v}
\end{equation}
 be the uniquely determined affine linear map that maps $f(\theta_0)$ to $1$ and $v$ to $0$. Applying $\ell$ component-wise to the sequences of $f$-values corresponding to the point sequences in $\Theta$, we obtain three normalized sequences for gradient descent,
        \begin{align*}
            \Bar{f}_i^{\operatorname{GD}}\,:=\,\left(1, \ell (f(\theta_{1}^{(\operatorname{GD}, \alpha_i)})) , \dots , \ell (f(\theta_{20}^{(\operatorname{GD}, \alpha_i)}))\right),
        \end{align*}
        and three normalized sequences for kernel descent,
        \begin{align*}
            \Bar{f}_i^{\operatorname{KD}}\,:=\,\left(1, \ell (f(\theta_{1}^{(\operatorname{KD}, \alpha_i)})) , \dots , \ell (f(\theta_{20}^{(\operatorname{KD}, \alpha_i)}))\right),
        \end{align*}
        where $i=1,2,3$.

    \paragraph{Illustration and observations.}
   
   For a fixed learning rate $\alpha_i$, where $i\in\{1,2,3\}$, and a fixed algorithm $\operatorname{A}\in\{\operatorname{GD}, \operatorname{KD}\}$ we can obtain an averaged sequence of normalized $f$-values by taking the component-wise average over all $N=5,000$ normalized sequences $\Bar{f}^{\operatorname{A}}_i$. The so-obtained sequence can be interpreted as the average performance of the considered combination of algorithm and learning rate.
   Comparisons of these average performances are visualized in Figure \ref{fig:lossfct_curves_gd_vs_kd}.

    Upon closer inspection we realized that, for the smallest learning rate $\alpha_1$, the direction of the {\emph{true}} gradient vector did typically not change by much between two consecutive gradient descent steps. Because of this, and since we ensured that the length of the path in parameter space traversed during the inner optimization loop of kernel descent equals the length of the corresponding gradient descent step, one cannot expect kernel descent with $L=1$ to significantly outperform gradient descent at learning rate $\alpha_1$ in our experiments. 
    Nevertheless, as the quality of the local approximations computed by kernel descent with $L=1$ is typically higher than the quality of the best linear approximation (see Section \ref{subsec:experiments_approx_quality}), kernel descent with $L=1$ can still be expected to perform slightly better than gradient descent, even at small learning rates. These expectations are confirmed in Figure \ref{fig:lossfct_curves_gd_vs_kd_learning_rate_alpha1}.

    For the largest learning rate, $\alpha_3$, the steps taken by gradient descent were often too large and the algorithm did not converge. Contrary to this, in the inner optimization loop of the corresponding execution of kernel descent with $L=1$, the descent often got caught in a local minimum of the local approximation, which effectively prevented the algorithm from moving away too far from the development point. As a result, kernel descent typically converged, while gradient descent typically did not. This is mirrored in Figure \ref{fig:lossfct_curves_gd_vs_kd_learning_rate_alpha3}.

    For learning rate $\alpha_2$, we observed a mixture of the phenomena which we observed for learning rates $\alpha_1$ and $\alpha_3$.
    Non-surprisingly, the robustness advantages of kernel descent (against taking exceedingly large steps in parameter space) were less pronounced than with learning rate $\alpha_3$, while the advantages in terms of speed of descent (owing to the better approximation quality, see Section \ref{subsec:experiments_approx_quality}) were more pronounced than with learning rate $\alpha_1$.
    The results for learning rate $\alpha_2$ are visualized in Figure \ref{fig:lossfct_curves_gd_vs_kd_learning_rate_alpha2}.

    Out of all six combinations of algorithm and learning rate, kernel descent with $L=1$ and learning rate $\alpha_3$ performed the best in our experiments. Moreover, every combination involving kernel descent performed better than all combinations involving gradient descent, see Figures \ref{fig:lossfct_curves_gd_vs_kd_kdalpha3_vs_gdall}, \ref{fig:lossfct_curves_gd_vs_kd_kdall_vs_gdall}, \ref{fig:lossfct_curves_gd_vs_kd_kdall_only}.

    In conclusion, kernel descent with $L=1$ outperformed gradient descent at small learning rates in our experiments, which we ascribe to the higher quality of the local approximations, see Section \ref{subsec:experiments_approx_quality}. Moreover, kernel descent with $L=1$ was more robust with respect to the choice of learning rate, which we ascribe to the presence of local minima for the local approximations computed during kernel descent with $L=1$. While the local approximations computed during kernel descent necessarily have local (and global) minima, it is not clear whether the robustness effect we observed in our experiments will continue to hold outside of our experimental setup. An investigation of the latter is left for future work.

    \begin{figure*}[htbp]
        \captionsetup{width=1.0\linewidth}
    	\centering
        \subcaptionbox[width=0.01\linewidth]{Comparison at learning rate $\alpha_1 = 7.0$.\label{fig:lossfct_curves_gd_vs_kd_learning_rate_alpha1}}{\includegraphics[width=2.13in]{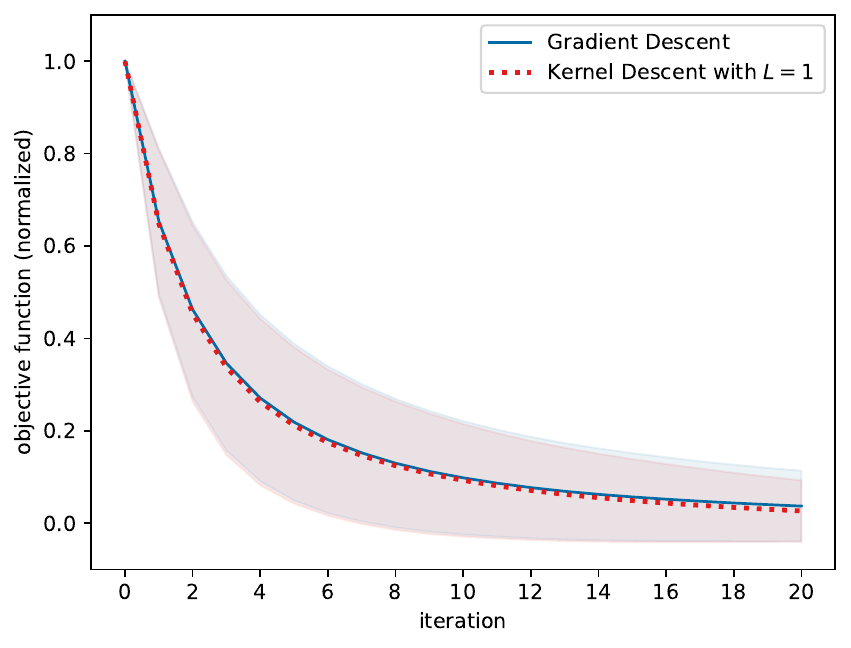}}\hspace{1.0em}%
        \subcaptionbox{Comparison at learning rate $\alpha_2 = 8.5$.\label{fig:lossfct_curves_gd_vs_kd_learning_rate_alpha2}}{\includegraphics[width=2.13in]{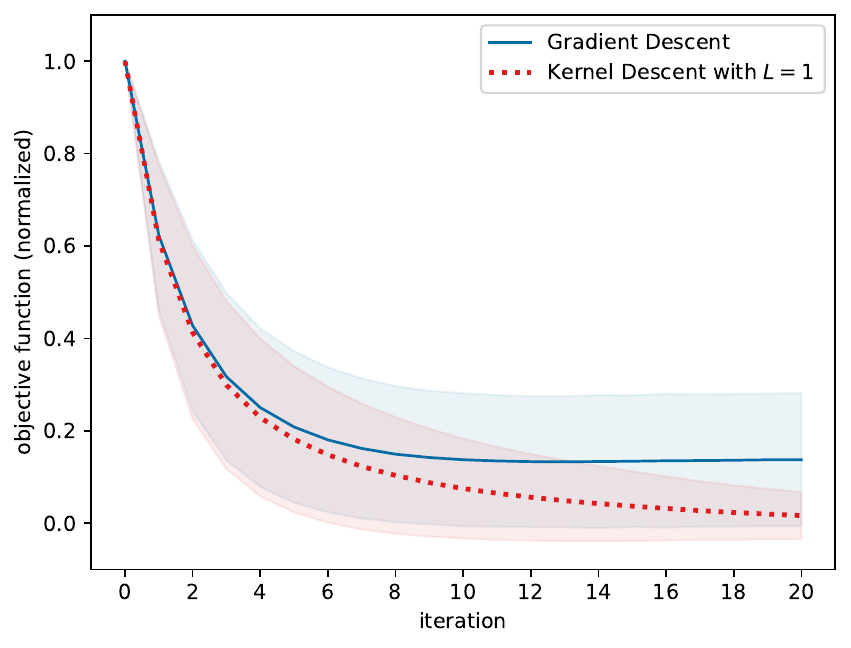}}\hspace{1.0em}%
        \subcaptionbox{Comparison at learning rate $\alpha_3 = 10.0$.\label{fig:lossfct_curves_gd_vs_kd_learning_rate_alpha3}}{\includegraphics[width=2.13in]{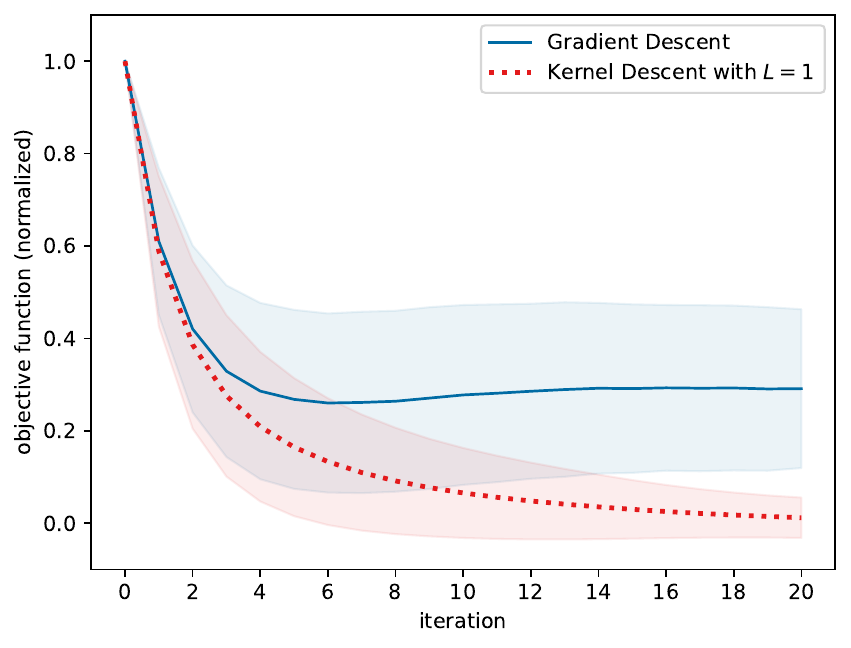}}
    	
    	\bigskip
    	
        \subcaptionbox{Comparison between gradient descent at learning rates $\alpha_1$, $\alpha_2$, $\alpha_3$ and kernel descent with $L=1$ at learning rate $\alpha_3$.\label{fig:lossfct_curves_gd_vs_kd_kdalpha3_vs_gdall}}{\includegraphics[width=2.13in]{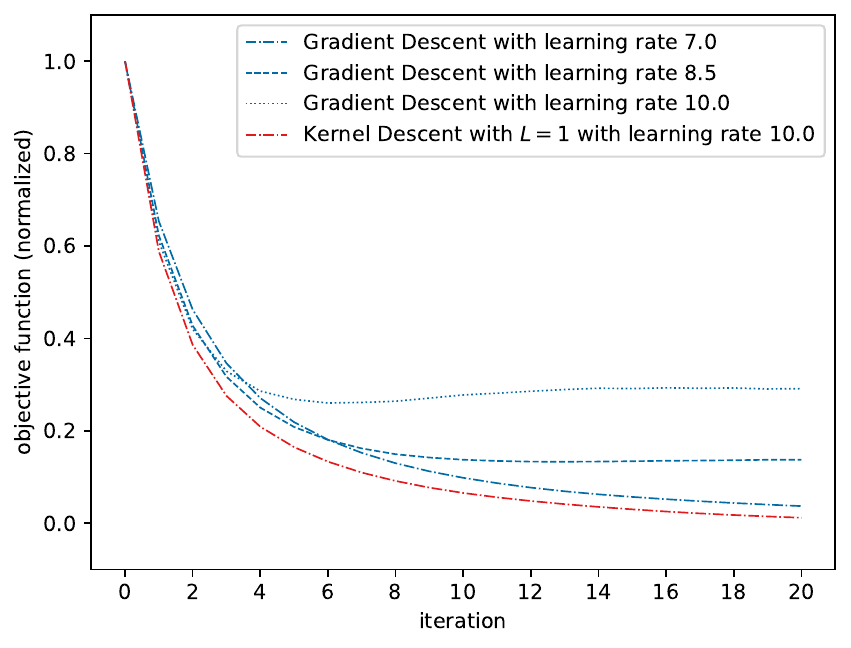}}\hspace{1.0em}%
        \subcaptionbox{Comparison between gradient descent and kernel descent with $L=1$ at learning rates $\alpha_1$, $\alpha_2$, $\alpha_3$.\label{fig:lossfct_curves_gd_vs_kd_kdall_vs_gdall}}{\includegraphics[width=2.13in]{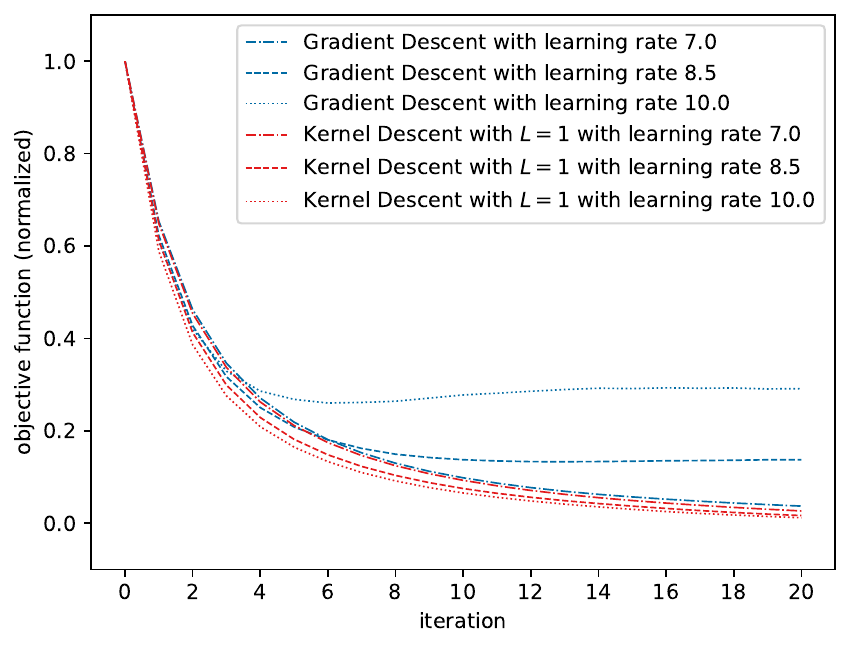}}\hspace{1.0em}%
        \subcaptionbox{Comparison between learning rates $\alpha_1$, $\alpha_2$, $\alpha_3$ for kernel descent with $L=1$.\label{fig:lossfct_curves_gd_vs_kd_kdall_only}}{\includegraphics[width=2.13in]{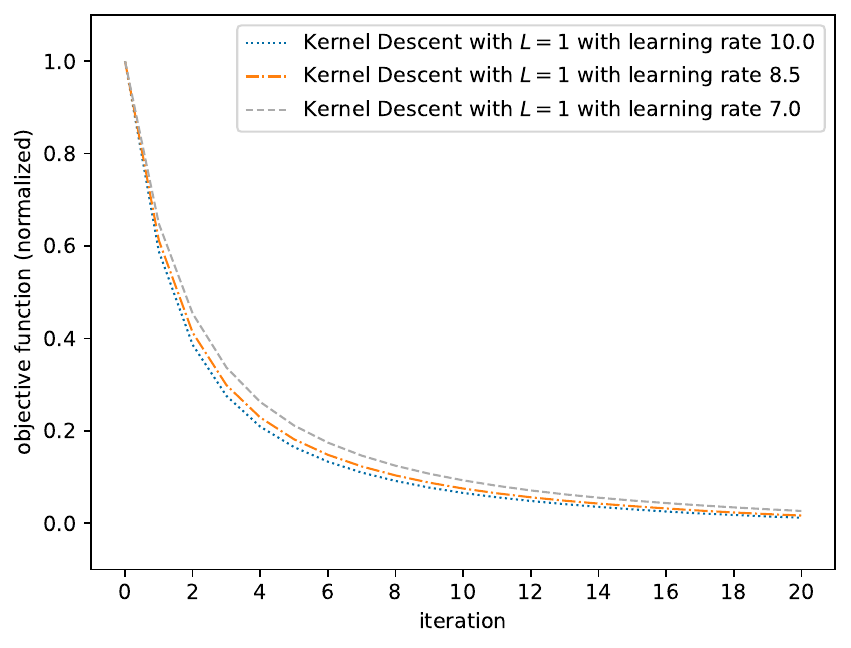}}
     
    	\caption{This figure shows the performance of gradient descent and kernel descent with $L=1$ for three different learning rates, averaged over $N=5,000$ respective executions. In each execution, the circuit, observable and initial point in parameter space were sampled randomly. Consequently, for the sake of comparability, we introduced a normalization step before averaging; see Section \ref{subsubsec:experiments_kd_vs_gd} for details. In Figures \ref{fig:lossfct_curves_gd_vs_kd_learning_rate_alpha1}, \ref{fig:lossfct_curves_gd_vs_kd_learning_rate_alpha2}, \ref{fig:lossfct_curves_gd_vs_kd_learning_rate_alpha3} we compare gradient descent to kernel descent with $L=1$ at learning rates $\alpha_1 =7.0$, $\alpha_2 =8.5$ and $\alpha_3 =10.0$ respectively; we also indicate the respective standard deviations. In Figures \ref{fig:lossfct_curves_gd_vs_kd_kdalpha3_vs_gdall}, \ref{fig:lossfct_curves_gd_vs_kd_kdall_vs_gdall}, \ref{fig:lossfct_curves_gd_vs_kd_kdall_only} we make comparisons between different learning rates.}
    	\label{fig:lossfct_curves_gd_vs_kd}
    \end{figure*}

    \subsubsection{Kernel descent versus quantum analytic descent}
    \label{subsubsec:experiments_kd_vs_qad}
    The comparison of kernel descent with $L=2$ versus quantum analytic descent follows a similar scheme as the comparison of kernel descent with $L=1$ versus gradient descent. Key differences concern the random sampling of the observable and the stopping criteria for the inner optimization loops of the algorithms.

    \paragraph{Circuit sampling.}

    The circuit and the initial point in parameter space are sampled precisely as described in Section \ref{sec:sampling}. The observable $\mathcal{M}$ is sampled as follows: We randomly sample (independently) twenty $n$-qubit Paulis $\mathcal{P}_1 , \dots , \mathcal{P}_{20}$ from the uniform distribution on $\{I,X,Y,Z\}^{\otimes n}$; we also randomly sample (independently) coefficients $c_1 , \dots , c_{20}$ from the standard normal distribution. We then set $\mathcal{M}:=\sum_{i=1}^{20}c_i\mathcal{P}_i$.

    \paragraph{The inner optimization loop.} 
    To ensure a fair comparison, the inner optimization loop was conducted in the same way for kernel descent and quantum analytic descent; specifically, we used gradient descent with learning rate $0.01$ and applied the same stopping criterion to both algorithms.
    In order to be as impartial as possible, we used one of the stopping criteria given in the original work on quantum analytic descent \cite{quantum_analytic_descent}:

    The inner optimization loop is stopped as soon as the {\emph{true}} value of $f$ is increased. Since every determination of the true value of $f$ requires an additional circuit evaluation, we only determine the true value of $f$ every $1000^\text{th}$ step of the inner loop. Moreover, in order to prevent an infinite loop and to upper-bound the number of circuit evaluations per iteration of the algorithms, we set an upper bound of $10000$ steps after which the inner optimization loop is aborted, even if the true value of $f$ has not been found to increase. Consequently, at most $9$ additional circuit evaluations are incurred as a result of the inner optimization loop -- which is negligible in comparison to the overall number of circuit evaluations per iteration. 
    
    \paragraph{Experiment execution.} 
    Due to the aforementioned higher computational cost per iteration the number of test runs for the current comparison is reduced from $N=5,000$ to $N=500$.
    In each run the following is performed:
    
    \begin{enumerate}
        \item A circuit, an initial point $\theta_0\in\mathbb{R}^m$ and an observable $\mathcal{M}$ are sampled randomly as explained previously.
        \item Starting from initial point $\theta_0$, both quantum analytic descent and kernel descent with $L=2$ are executed each with $T=5$ steps (with inner optimization loop as described above).
        We denote the resulting sequences of points in parameter space by
        $$
        \Theta^{\operatorname{QAD}}\,:=\,\left(\theta_{0}, \theta_{1}^{(\operatorname{QAD)}},  \dots , \theta_{5}^{(\operatorname{QAD})}\right)
        $$
        and
        $$
        \Theta^{\operatorname{KD}}\,:=\,\left(\theta_{0}, \theta_{1}^{(\operatorname{KD})},  \dots , \theta_{5}^{(\operatorname{KD})}\right)\,,
        $$
        respectively.
    \end{enumerate}
\paragraph{Normalization.}
As in Section \ref{subsubsec:experiments_kd_vs_gd} we need to introduce a normalization step before we can compare the values of the objective function $f$ between different runs. For each of the $N=500$ runs we proceed as follows: We denote the minimal value of $f$ attained during execution of the two algorithms by $v$. That is, 
$$
            v = \min (\{f(\theta_0 ), f(\theta_{t}^{(\operatorname{QAD})}), f(\theta_{t}^{(\operatorname{KD})}) | t\in \{1,\dots , 5\}\}).
$$
As above, no pathological cases (where $v=f(\theta_0)$) arose and thus no exception handling was necessary.

Using $v$, we define $\ell: \mathbb{R}\to\mathbb{R}$ as in \eqref{eq:l}. Applying $\ell$ component-wise to the sequences of values obtained during execution of the algorithms, we obtain normalized sequences 
\begin{align*}
            \Bar{f}^{\operatorname{QAD}}\,:=\,\left(1, \ell (f(\theta_{1}^{(\operatorname{QAD})})) , \dots , \ell (f(\theta_{5}^{(\operatorname{QAD})}))\right),
        \end{align*}
        and
        \begin{align*}
            \Bar{f}^{\operatorname{KD}}\,:=\,\left(1, \ell (f(\theta_{1}^{(\operatorname{KD})})) , \dots , \ell (f(\theta_{5}^{(\operatorname{KD})}))\right)\,.
        \end{align*}
As in Section \ref{subsubsec:experiments_kd_vs_gd}, we then compute the component-wise average of the $N=500$ normalized sequences $\Bar{f}^{\operatorname{QAD}}$ (resp.\ $\Bar{f}^{\operatorname{KD}}$).

    \paragraph{Results.} Kernel descent with $L=2$ clearly outperformed quantum analytic descent, in the sense that the average curve for kernel descent runs below that of quantum analytic descent, see Figure \ref{fig:lossfct_qad_vs_kd}.

    \begin{FigureInColumn}
        \centering
            \includegraphics[width = 3.0in]{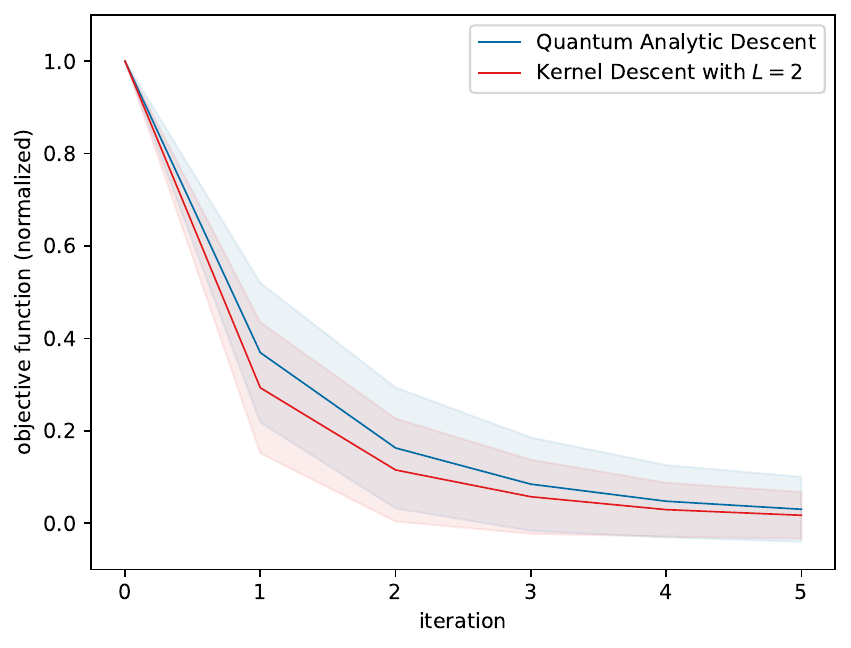}
        \captionof{figure}{This figure shows the performance of quantum analytic descent and kernel descent with $L=2$ averaged over $N = 500$ respective executions. In each execution, the circuit, observable and initial point in parameter space were sampled randomly. Consequently, for the sake of comparability, we introduced a normalization step before averaging; see Section \ref{subsubsec:experiments_kd_vs_qad} for details. We also indicate the respective standard deviations.}
        \label{fig:lossfct_qad_vs_kd}
    \end{FigureInColumn}
    
\subsection{Experiment with $8$-qubit spin ring Hamiltonian}\label{subsec:experiments_spin_ring_hamiltonian}

The first part of Section \ref{sec:experiments}, described in Sections \ref{sec:sampling}, \ref{subsec:experiments_approx_quality}, and \ref{susbsec:experiments_performance}, was concerned with a large number of randomized circuits and observables. Here, in the second part, we experimentally compare the performance of gradient descent and kernel descent with $L=1$ for an observable and a circuit that are more representative of real world applications.

\paragraph{The problem.} We address the problem of finding the ground state energy of an $8$-qubit spin ring Hamiltonian, which was already investigated in \cite{quantum_analytic_descent}. Specifically, we consider the Hamiltonian
$$\sum_{i=1}^{8} 0.1\cdot(X_i X_{i+1} + Y_i Y_{i+1} + Z_i Z_{i+1}) + \sum_{i=1}^{8} \omega_i Z_i,$$
where all indices are taken modulo $8$ with system of representatives $\{1,2,\dots , 8\}$, and $\omega_1,\dots , \omega_8\in [-1,1]$ were independently sampled uniformly at random.

\paragraph{Ansatz circuit.} We adopt the ansatz circuit from \cite{quantum_analytic_descent}, i.e., starting with $n=8$ qubits in state $|0\rangle^{\otimes 8}$, we apply a parametrized $R_X$ gate to every qubit and subsequently apply four consecutive blocks, each consisting of
\begin{enumerate}
    \item parametrized $R_{ZZ}$ gates, applied to qubit pairs $(7, 8)$, $(6, 7)$, $(5, 6)$, $(4, 5)$, $(3, 4)$, $(2, 3)$, $(1, 2)$, and $(8, 1)$,
    \item followed by parametrized $R_Y$ gates applied to every qubit,
    \item followed by parametrized $R_X$ gates applied to every qubit.
\end{enumerate}
In particular, we have $m=8+4\cdot(8+8+8) = 104$. The circuit is visualized in Figure \ref{fig:circuit_revision_1}.

\tikzset{
    operator/.append style={draw,fill=blue!10, text height = 50pt, text width = 20pt},
}

\begin{figure*}[htbp]
    \centering
        \newsavebox{\quantikzbox}
 \sbox{\quantikzbox}{%
\begin{quantikz}[row sep=0.3cm]  
\lstick{\( \ket{0} \)} & \gate[style={text height = 20pt} ]{R_X} & \qw & \qw & \qw & \qw & \qw & \qw & \gate[wires=2]{R_{ZZ}} & \gate[wires=8]{\makebox[3pt]{\raisebox{4pt}[13pt][0pt]{$\begin{array}{c} \vspace{-.22cm} {\scriptscriptstyle 2}  \\ \\  \\ \\ \\ \\ \\  \\ \vspace{+.2cm} R_{ZZ}  \\ \\ \\ \\ \\ \\ \\ {\scriptscriptstyle 1} \end{array}$}}} & \gate[style={text height = 20pt}]{R_Y} & \gate[style={text height = 20pt}]{R_X} & \qw \\
\lstick{\( \ket{0} \)} & \gate[style={text height = 20pt} ]{R_X} & \qw & \qw & \qw & \qw & \qw & \gate[wires=2]{R_{ZZ}} & \qw & \qw & \gate[style={text height = 20pt} ]{R_Y} & \gate[style={text height = 20pt} ]{R_X} & \qw \\
\lstick{\( \ket{0} \)} & \gate[style={text height = 20pt} ]{R_X} & \qw & \qw & \qw & \qw & \gate[wires=2]{R_{ZZ}} & \qw & \qw & \qw & \gate[style={text height = 20pt} ]{R_Y} & \gate[style={text height = 20pt} ]{R_X} & \qw \\
\lstick{\( \ket{0} \)} & \gate[style={text height = 20pt} ]{R_X} & \qw & \qw & \qw & \gate[wires=2]{R_{ZZ}} & \qw & \qw & \qw & \qw & \gate[style={text height = 20pt} ]{R_Y} & \gate[style={text height = 20pt} ]{R_X} & \qw \\
\lstick{\( \ket{0} \)} & \gate[style={text height = 20pt} ]{R_X} & \qw & \qw & \gate[wires=2]{R_{ZZ}} & \qw & \qw & \qw & \qw & \qw & \gate[style={text height = 20pt} ]{R_Y} & \gate[style={text height = 20pt} ]{R_X} & \qw \\
\lstick{\( \ket{0} \)} & \gate[style={text height = 20pt} ]{R_X} & \qw & \gate[wires=2]{R_{ZZ}} & \qw & \qw & \qw & \qw & \qw & \qw & \gate[style={text height = 20pt} ]{R_Y} & \gate[style={text height = 20pt} ]{R_X} & \qw \\
\lstick{\( \ket{0} \)} & \gate[style={text height = 20pt} ]{R_X} & \gate[wires=2]{R_{ZZ}} & \qw & \qw & \qw & \qw & \qw & \qw & \qw & \gate[style={text height = 20pt} ]{R_Y} & \gate[style={text height = 20pt} ]{R_X} & \qw \\
\lstick{\( \ket{0} \)} & \gate[style={text height = 20pt} ]{R_X} & \qw & \qw & \qw & \qw & \qw & \qw & \qw & \qw & \gate[style={text height = 20pt} ]{R_Y} & \gate[style={text height = 20pt} ]{R_X} & \qw \\
\end{quantikz}
}
$\Ubrace[60pt]{\hspace{-48pt}\usebox{\quantikzbox}}{U}[\dots]$
    \caption{This figure shows the ansatz circuit used for the experiments in Section \ref{subsec:experiments_spin_ring_hamiltonian}. While, for clarity, the $U$-block is depicted only once, the circuit consists of an initial layer of parametrized $R_X$ gates and four consecutive $U$-blocks (with different parameters, i.e., we are not re-uploading parameters). In particular, the ansatz circuit has $m=8+4\cdot (8+8+8)=104$ parameters.}\label{fig:circuit_revision_1}
\end{figure*}

\paragraph{Experiment execution.} All circuit evaluations were executed using $1000$ measurement shots. After sampling an initial point in parameter space from the uniform distribution on $[-\pi ,\pi )^{104}$, both gradient descent and kernel descent with $L=1$ were executed $N=100$ times, with $30$ iterations per execution, yielding $N=100$ respective sequences in parameter space of length $31$ (including the initial point). For each such sequence in parameter space we then determined the corresponding sequence of true values of the objective function $f$ using statevector simulation (exact evaluations of $f$  did not have an impact on the execution of either algorithm, but were only used to compare the results). Subsequently, for both algorithms, we computed the component-wise average of the $N=100$ sequences of values to get an indication of their respective average performance. However, since the stopping criterion for kernel descent we used in this experiment incurs additional circuit evaluations in the inner optimization loop (see below), we compare the performance of the algorithms based on the number of executed measurement shots. Since, when using the above-mentioned stopping criterion for the inner optimization loop, the number of measurement shots per iteration of kernel descent is not constant, we also averaged the cumulative number of measurement shots for each of the $30$ iterations over the $N=100$ executions of kernel descent.

\paragraph{Gradient descent.} The learning rate for gradient descent was $0.3$ (the largest learning rate under which stable behavior was observed), and gradients were evaluated using the parameter-shift rules with $\pm\frac{\pi}{2}$-shifts.

\paragraph{Kernel descent with $L=1$.} We employed a similar stopping criterion for the inner optimization loop as in Section \ref{subsubsec:experiments_kd_vs_qad}. Specifically, we used gradient descent with learning rate $0.001$ in the inner optimization loop and terminated the inner optimization loop as soon as the value of $f$ was increased; the value of $f$ (up to shot noise) was determined every $100^\text{th}$ step of the inner loop and we set an upper bound of $5000$ steps for the inner loop, after which the loop was aborted even if the value of $f$ was not found to increase. The upside of using this stopping criterion is that one can forego a (potentially lengthy) tuning process for the hyperparameters of the inner optimization loop. The downside is that this stopping criterion incurs several additional circuit evaluations in each iteration of the outer optimization loop. In order to not give kernel descent an unfair advantage over gradient descent (the additional circuit evaluations), we compare the performance of the algorithms based on the number of executed measurement shots (and not based on the number of executed iterations), see above.

\paragraph{Results.} Kernel descent with $L = 1$ clearly outperformed gradient descent, in the sense that the average curve for kernel descent runs below that of gradient descent, see Figure \ref{fig:lossfct_spin_ring_hamiltonian}.

    \begin{FigureInColumn}
        \centering
            \includegraphics[width = 3.0in]{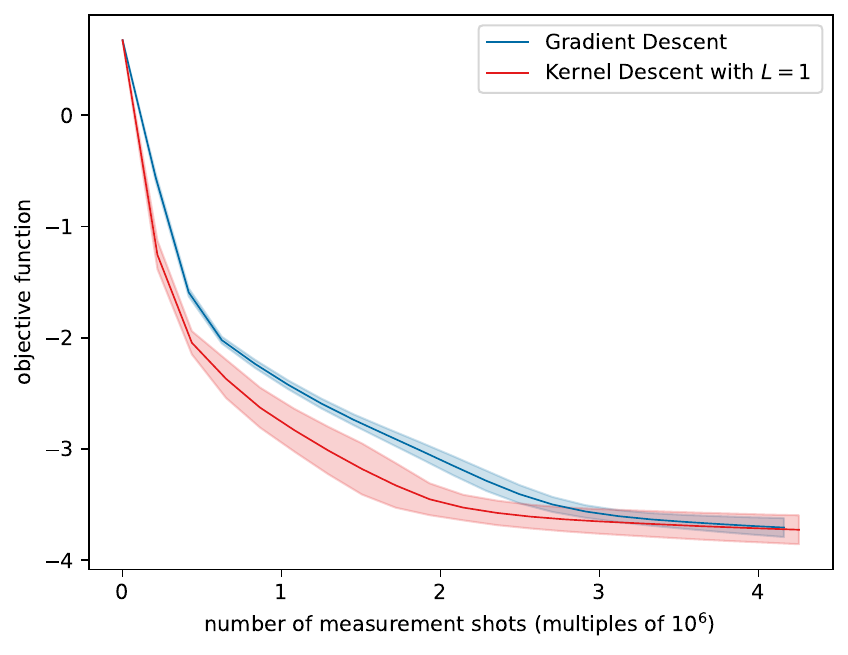}
        \captionof{figure}{This figure shows the performance (under the presence of measurement shot noise) of gradient descent and kernel descent with $L=1$ in finding the ground state energy of an $8$-qubit spin ring Hamiltonian, averaged over $100$ respective executions. Since the stopping criterion we used for the inner optimization loop of kernel descent incurs additional circuit evaluations, we compare the performance of the algorithms based on the number of executed measurement shots; see Section \ref{subsec:experiments_spin_ring_hamiltonian} for details. We also indicate the respective standard deviations.}
        \label{fig:lossfct_spin_ring_hamiltonian}
    \end{FigureInColumn}

    \section{CONCLUSION}
    \label{sec:conclusion}
    
    In this article we introduced \algo\ (\algoAbb), a novel algorithm for minimizing the functions underlying variational quantum algorithms. More precisely, \algoAbb\ is an algorithm for finding a choice of parameters for a parametrized quantum circuit that minimizes the expected value of a given observable with respect to the state computed by this circuit.
    
    As is the case for quantum analytic descent, each iteration of our algorithm involves using a quantum device to construct a classical local approximation to the objective function and subsequently carrying out a classical optimization routine in order to minimize this approximation. The local approximations are constructed by exploiting the fact that the functions computed by variational quantum algorithms are contained in a set which naturally carries the structure of a reproducing kernel Hilbert space.
    
    This added structure also serves as a heuristic explanation for the fact that the local approximations constructed during execution of our algorithm significantly outperform $L$-th order Taylor approximation in our experiments despite not requiring a larger number of quantum circuit evaluations to compute. Here, $L$ is a hyperparameter which controls the order of the local approximations featuring in our algorithm. After theoretically comparing our algorithm to gradient descent and to quantum analytic descent, we carried out extensive experiments to demonstrate the validity of our approach.

    Potential topics for future research include, but are not limited to, the following:
    \begin{itemize}
        \item In the present work we considered parametrized gates of the form $x\mapsto \exp (-i \tfrac{x}{2}G)$ whose Hermitian generators $G$ were assumed to have as set of eigenvalues $\{-1,1\}$. While this includes the often-studied case where $G$ is a (non-identity) tensor product of Pauli matrices, this still limits the generality of our approach. In light of this, a natural direction for future work would be to extend our results by imposing less restrictive assumptions on the eigenvalues of the generators of the occurring parametrized gates.
        \item It would be useful to develop an (adaptive) measurement shot allocation strategy as in \cite{quantum_analytic_descent}, \cite{ito2023} and \cite{tamiya2022}.
        \item Building on our experiments in Section \ref{sec:experiments}, which take measurement shot noise into account, a logical next step would be to conduct experiments under the presence of quantum hardware noise -- either real or simulated.
        \item As stated in Section \ref{sec:introduction}, advanced techniques, such as normalized gradient descent \cite{Hazan2015BeyondCS}, Nesterov’s Accelerated Gradient method \cite{Nesterov1983AMF}, the ADAM optimizer \cite{KingBa15}, or natural gradient descent \cite{natural_gradient_descent}, can be adapted to kernel descent. Describing these enhancements in detail and exploring them experimentally would be a natural continuation of this work.
    \end{itemize}


    \section*{DATA AVAILABILITY}
The datasets used and/or analysed during the current study are available from the corresponding author on reasonable request.
    
    \section*{ACKNOWLEDGEMENT}
    This article was written as part of the Qu-Gov project, which was commissioned by the German Federal Ministry of Finance. The authors want to extend their gratitude to Kim Nguyen, Manfred Paeschke, Oliver Muth, Andreas Wilke, and Yvonne Ripke for their continuous encouragement and support.

    \bibliographystyle{amsplain}
	\bibliography{literature/bibliography}

\end{multicols}

\newpage
\begin{appendices}
    Here we give a proof for Theorem \ref{theorem:good_approx}(\ref{theorem:good_approx_error_estimate}). In fact, we prove something slightly more general:

    \begin{lemma}
        \label{lemma:error_estimate_when_vanishing_along_certain_subspaces}
        Let $L,m\in\mathbb{Z}_{\geq 1}$ be positive integers, let $\Vert\cdot\Vert$ be a norm on $\mathbb{R}^m$, let $f\colon\mathbb{R}^m\to\mathbb{R}$ be a smooth function, and assume that
        \begin{itemize}
            \item $(D^{\alpha}f)(0) = 0$ for all multiindices $\alpha\in (\mathbb{Z}_{\geq 0})^m$ with $|\alpha |\leq L$,
            \item $f=0$ on some closed set $\mathcal{C}\subseteq\mathbb{R}^m$ satisfying $0\in\mathcal{C}$.
        \end{itemize}
        Then there exist an open neighborhood $U\subseteq\mathbb{R}^m$ of $0$ and a constant $C>0$, such that the estimate 
        \begin{align*}
            |f(x)|
            \leq
            C\Vert x \Vert^L \operatorname{dist}_{\Vert\cdot\Vert}(x, \mathcal{C})
        \end{align*}
        holds for all $x\in U$.
    \end{lemma}

    \begin{proof}
        Since all norms on $\mathbb{R}^m$ are pairwise equivalent, we can assume without loss of generality that $\Vert\cdot\Vert$ is the $2$-norm on $\mathbb{R}^m$. By assumption, for all $j\in\{1,\dots , m\}$, the function $\frac{\partial f}{\partial x_j}\colon\mathbb{R}^m\to\mathbb{R}$ is smooth and satisfies $\left(D^{\beta}\frac{\partial f}{\partial x_j}\right)(0) = 0$ for all multiindices $\beta\in (\mathbb{Z}_{\geq 0})^m$ with $|\beta |\leq L-1$. In particular, for all $j$, we find a constant $C_j >0$ and an open neighborhood $U_j\subseteq\mathbb{R}^m$ of $0$, such that the inequality
        \begin{align*}
            \left|\frac{\partial f}{\partial x_j}(x)\right|
            & = \left|\frac{\partial f}{\partial x_j}(x) - \sum_{\beta\in (\mathbb{Z}_{\geq 0})^m\colon |\beta|\leq L-1} \frac{\left(D^{\beta}\frac{\partial f}{\partial x_j}\right)(0)}{\beta!}x^\beta\right|\\
            & \leq
            C_j \Vert x\Vert^L
        \end{align*}
        holds for all $x\in U_j$. Now pick some $\delta >0$ with $B_\delta (0)\subseteq U_1\cap\dots\cap U_m$ (here, $B_\delta (0)$ denotes the open ball of radius $\delta$ centered at $0$ in $\mathbb{R}^m$). Since $\mathcal{C}$ contains $0$ and is thus non-empty, the map $\operatorname{dist}_{\Vert\cdot\Vert}(\cdot , \mathcal{C})\colon\mathbb{R}^m\to\mathbb{R}$, $x\mapsto \operatorname{dist}_{\Vert\cdot\Vert}(x , \mathcal{C}) = \inf_{z\in\mathcal{C}}\Vert x-z\Vert$ is well-defined and continuous. Hence, due to compactness, this map attains its maximum on the closed ball $\overline{B_\delta (0)}$. In particular, there exists some $R>0$, such that $\operatorname{dist}_{\Vert\cdot\Vert}(x , \mathcal{C}) < R$ for all $x\in\overline{B_\delta (0)}$. Moreover, for all $j\in\{1,\dots , m\}$, the mapping
        \begin{align*}
            \overline{B_{\delta + R} (0)}\setminus B_\delta (0) \to \mathbb{R}_{\geq 0}, x\mapsto \frac{\left|\frac{\partial f}{\partial x_j}(x)\right|}{\Vert x\Vert^L}
        \end{align*}
        is well-defined, continuous, and hence -- due to compactness -- bounded from above by some constant $\tilde{C}_j >0$. Define $\tilde{C}:=\max\{C_1 , \dots , C_m , \tilde{C}_1 , \dots , \tilde{C}_m\}$. We then have for all $j\in\{1,\dots , m\}$ and $x\in \overline{B_{\delta + R} (0)}$:
        \begin{align*}
            \left|\frac{\partial f}{\partial x_j}(x)\right|
            & = 
            \begin{cases}
                \left|\frac{\partial f}{\partial x_j}(x)\right| & \text{if }x\in B_{\delta} (0)\\
                \frac{\left|\frac{\partial f}{\partial x_j}(x)\right|}{\Vert x\Vert^L}\cdot\Vert x\Vert^L  & \text{if }x\in \overline{B_{\delta + R} (0)}\setminus B_\delta (0)
            \end{cases}\\
            & \leq
            \begin{cases}
                C_j\Vert x\Vert^L & \text{if }x\in B_{\delta} (0)\\
                \tilde{C}_j\Vert x\Vert^L  & \text{if }x\in \overline{B_{\delta + R} (0)}\setminus B_\delta (0)
            \end{cases}\\
            & \leq
            \tilde{C} \Vert x\Vert^L .
        \end{align*}
        Now pick and fix an arbitrary point $x\in B_\delta (0)$. Since $\emptyset\neq\mathcal{C}$ is closed in $\mathbb{R}^m$, there exists a $z\in\mathcal{C}$, such that $\Vert x-z\Vert = \operatorname{dist}_{\Vert\cdot\Vert}(x , \mathcal{C})$. Since $\operatorname{dist}_{\Vert\cdot\Vert}(x , \mathcal{C}) < R$ (owing to the fact that $x\in  B_\delta (0)\subseteq\overline{ B_\delta (0)}$), we have
        $$
        \Vert z\Vert =\Vert x + (z-x)\Vert \leq \Vert x\Vert + \Vert z-x\Vert = \Vert x\Vert +\operatorname{dist}_{\Vert\cdot\Vert}(x , \mathcal{C}) < \delta + R,
        $$
        hence $z\in B_{\delta + R} (0)$. By convexity we then have $z + t(x-z)\in B_{\delta + R} (0)$ for all $t\in[0,1]$. Moreover, since $0\in\mathcal{C}$, we have
        \begin{align*}
        \Vert z + t(x-z) \Vert
        & =
        \Vert x+(1-t)(z-x) \Vert\\
        & \leq
        \Vert x\Vert + (1-t)\Vert z-x \Vert\\
        & =
        \Vert x\Vert + (1-t)\operatorname{dist}_{\Vert\cdot\Vert}(x , \mathcal{C})\\
        & \leq
        \Vert x\Vert + (1-t) \Vert x-0\Vert\\
        & \leq
        2\Vert x\Vert
        \end{align*}
        for all $t\in[0,1]$. Since $\Vert\cdot\Vert$ is the $2$-norm on $\mathbb{R}^m$, it obeys the Cauchy-Schwarz inequality with the canonical inner product on $\mathbb{R}^m$. Letting $\gamma\colon\mathbb{R}\to\mathbb{R}^m$, $t\mapsto z + t(x-z)$, we then calculate, combining all of the above and using that $f(z)=0$, since $f$ vanishes on $\mathcal{C}$:
        \begin{align*}
            |f(x)|
            & =
            |f(x)-f(z)|\\
            & =
            |f(\gamma (1)) - f(\gamma (0))|\\
            & =
            \left|\int_0^1 (f\circ\gamma)'(t) dt\right|\\
            & \leq
            \int_0^1 \left|(f\circ\gamma)'(t)\right| dt\\
            & =
            \int_0^1 \left|\underbrace{(\nabla f) (\gamma(t))}_{\in\mathbb{R}^{1\times m}} \cdot \underbrace{\gamma'(t)}_{\in\mathbb{R}^{m\times 1}}\right| dt\\
            & \leq
            \int_0^1 \Vert(\nabla f) (\gamma(t))\Vert \cdot \Vert\gamma'(t)\Vert dt && \text{(by Cauchy-Schwarz)}\\
            & =
            \int_0^1 \sqrt{\sum_{j=1}^m \left(\frac{\partial f}{\partial x_j}(z + t(x-z))\right)^2}\cdot\Vert x-z\Vert dt\\
            & = 
            \operatorname{dist}_{\Vert\cdot\Vert}(x , \mathcal{C})\cdot\int_0^1 \sqrt{\sum_{j=1}^m \left(\frac{\partial f}{\partial x_j}(\underbrace{z + t(x-z)}_{\in B_{\delta + R}(0)})\right)^2} dt\\
            & \leq
            \operatorname{dist}_{\Vert\cdot\Vert}(x , \mathcal{C})\cdot\int_0^1 \sqrt{\sum_{j=1}^m \left(\tilde{C}\Vert z + t(x-z)\Vert^L\right)^2} dt\\
            & \leq
            \operatorname{dist}_{\Vert\cdot\Vert}(x , \mathcal{C})\cdot\int_0^1 \sqrt{\sum_{j=1}^m \left(2^L\tilde{C}\Vert x\Vert^L\right)^2} dt\\
            & =
            \operatorname{dist}_{\Vert\cdot\Vert}(x , \mathcal{C})\cdot \sqrt{\sum_{j=1}^m \left(2^L\tilde{C}\Vert x\Vert^L\right)^2}\\
            & =
            \operatorname{dist}_{\Vert\cdot\Vert}(x , \mathcal{C})\cdot \sqrt{m \left(2^L\tilde{C}\Vert x\Vert^L\right)^2}\\
            & =
            2^L\sqrt{m}\tilde{C}\cdot\Vert x\Vert^L \operatorname{dist}_{\Vert\cdot\Vert}(x , \mathcal{C})
        .\end{align*}
        The claim now follows with $U:=B_\delta (0)$ and $C:=2^L \sqrt{m} \tilde{C}$.
        \end{proof}
        
\end{appendices}

\end{document}